\newtheorem{theorem}{Theorem}
\newtheorem{definition}{Definition}
\newtheorem{lemma}{Lemma}
\newtheorem{proposition}{Proposition}
\newtheorem{corollary}{Corollary}
\newtheorem{example}{Example}
\newtheorem{remark}{Remark}
\title[Sections with constant surface gravity on null hypersurfaces]{On the existence of sections with constant surface gravity on null hypersurfaces}
\keywords{surface gravity; null hypersurface; affine length function}
\subjclass[2020]{53C50, 53C40}
\begin{document}

	\author{Ivan P. Costa e Silva}
	\address{
		Department of Mathematics, Universidade Federal de Santa Catarina, 88.040-900\\
		Florianópolis-SC, Brazil}
	\email{pontual.ivan@gmail.com}
	
	\author{José L. Flores}
	\address{Departamento de Álgebra, Geometría y Topología, Facultad de Ciencias, Universidad de Málaga, Campus Teatinos, 29071  \\ Málaga, Spain}
	\email{floresj@uma.es}
	
	\author{Benjamín Olea}
	\address{	Departamento de Matemática Aplicada, Escuela de Ingenierías Industriales, Universidad de Málaga, Extensión Campus Teatinos, 29071 \\ Málaga, Spain}
	\email{benji@uma.es}

	\begin{abstract}
		We identify, in spacetimes satisfying the null convergence condition, a certain natural class of null hypersurfaces that admit null sections with constant surface gravity. Our work is meant to offer complementary results to previous work on null hypersurfaces of zero expansion arising from the study of Cauchy and black hole horizons in general relativity. 
	\end{abstract}

	\maketitle

	\newcommand{\s}{\mathcal{S}}
	\newcommand{\gs}{\kappa}  
	\newcommand{\fpf}{b}  
	\newcommand{\alf}{\Lambda} 

	\section{Introduction}\label{sec0}
	Let $(M,g)$ be a {\it spacetime}, i.e. a connected time-oriented Lorentzian manifold, with dimension $n\geq3$. A {\it null hypersurface} in $(M,g)$ is a codimension-one submanifold in $M$ with an everywhere-degenerate induced metric tensor. Given a null hypersurface $L\subset M$, there always exists a tangent null vector field unique up to rescaling by an everywhere-nonzero function on $L$. We shall refer to any such $\xi\in\mathfrak{X}(L)$ as a {\it null section} of $L$. Associated with any null section $\xi$ there exists a  function $\gs_\xi\in C^\infty(L)$ such that 
	$$\nabla_\xi\xi=\gs_\xi \cdot \xi,$$
	called the {\it surface gravity}\footnote{The reader should be warned, however, that physicists use this term in many, sometimes inequivalent, ways. See, e.g., \cite{Visser} and references therein for a thorough discussion.} of $L$ with respect to $\xi$. In particular, the above equation implies that the maximal integral curves of $\xi$ are  null pregeodesics, which are called {\it null (geodesic) generators} of $L$ when they are parametrized as geodesics.
	
	In this context, a classic, much-studied question is whether we can find a null section with {\it constant} associated surface gravity. Some known situations where an affirmative answer can be given are the following.
	\begin{itemize}
		\item There exists a conformal vector field in $(M,g)$ transverse to the null hypersurface $L$ \cite{GutOleainduced}.
		\item The null hypersurface $L$ admits a codimension two (in $M$) spacelike submanifold which causally separates it \cite[Theorem 18]{Kupeli}.
		
		\item The dominant energy condition holds in $(M,g)$
		and the null hypersurface $L$ is a Killing horizon (see \cite{GutOleaAxioms} for a short proof). This is known in the physics literature as the {\it zeroth law of black hole thermodynamics} and it is a fundamental tenet of black hole physics, providing much of the motivation behind the study of null sections with constant surface gravity. (Recall that the {\it dominant energy condition} 
		holds in a spacetime $(M,g)$ when for any future-directed timelike vector $u \in TM$ the vector metrically equivalent to $-G(u,\cdot)$ is either zero or future-directed causal, where $G$ denotes the Einstein tensor of $(M,g)$. A null hypersurface $L$ is a  {\it Killing horizon} if there is a Killing vector field $K\in\mathfrak{X}(M)$ whose restriction to $L$ is a null section thereof. In particular, this implies that $L$ is totally geodesic.)

		\item The dominant energy condition holds in $(M,g)$ and the null hypersurface is totally geodesic and compact. This result is also physically motivated, this time around by the so-called {\it cosmic censorship conjecture} in general relativity \cite{Minguzzi,Reiris}.
	\end{itemize}
	Note that in each of these situations one either assumes the spacetime $(M,g)$ has nontrivial symmetries or else that somewhat stringent topological conditions hold on the hypersurface $L$. In this work we prove the following theorem, which guarantees the existence of a null section with nonzero constant surface gravity under quite mild geometric restrictions on the spacetime and on the null hypersurface. 
	
	\begin{theorem}
		\label{teoprin}
		Let $(M,g)$ be a spacetime satisfying the {\it null convergence condition}, i.e., $Ric(u,u) \geq 0$ for any null $u\in TM$, and let $L\subset M$ be a future weakly inextendible\footnote{If $L$ is not future weakly inextendible, the conclusion of the theorem still holds if conditions (1) and (2) are satisfied by a {\em maximal} future weak extension.} null hypersurface that 
		\begin{enumerate}
			\item has future prolonged null geodesic generators, and
			\item is of future finite type.
		\end{enumerate}
		Then $L$ admits a null section with nonzero constant surface gravity.
	\end{theorem}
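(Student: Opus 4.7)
The plan is to construct explicitly a null section with constant nonzero surface gravity via an affine-type rescaling of a geodesic null section of $L$ expressed in terms of the affine length function $\alf$.

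First, I would fix a globally defined smooth geodesic null vector field $\xi_0 \in \mathfrak{X}(L)$, so that $\nabla_{\xi_0}\xi_0 = 0$ (equivalently $\gs_{\xi_0} \equiv 0$). This is obtained from an arbitrary smooth future-pointing null section on $L$ by the standard linear ODE reparametrization along generators, with initial data prescribed coherently using the weak inextendibility of $L$. Next, define $\alf : L \to [0,\infty)$ by letting $\alf(q)$ be the supremum of those $t \geq 0$ for which the maximally extended null geodesic of $(M,g)$ starting at $q$ with initial velocity $\xi_0(q)$ remains in $L$ on $[0,t)$. Hypothesis~(2) (future finite type) guarantees that $\alf$ is finite and uniformly bounded above by some constant $\fpf$, while hypothesis~(1) (future prolongation) ensures that the underlying geodesic is actually defined on $[0,\alf(q)]$ and beyond in $M$. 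By construction, $\alf$ decreases at unit rate along future-directed generators of $\xi_0$, so $\xi_0(\alf) = -1$ identically on $L$.

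The key technical step, which I expect to be the main obstacle, is establishing the smoothness of $\alf$ on $L$. The difficulty is that the generators of $L$ are tangent---not transversal---to $L$ itself, so the implicit function theorem does not apply directly to the geodesic exit map. I anticipate that this regularity will come from a combination of the null convergence condition, via the Raychaudhuri equation for the expansion of $\xi_0$ and associated focal-point estimates, together with future weak inextendibility and future prolongation of generators, which together should rule out jumps in the exit time and ensure smooth dependence of generators on initial data.

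Once the smoothness and boundedness of $\alf$ are in hand, the construction is immediate. Fix any nonzero constant $c$ and choose a constant $A$ large enough that $A - c\alf > 0$ throughout $L$ (for instance $A > c\fpf$ when $c > 0$, or any $A > 0$ when $c < 0$). Set $\xi := (A - c\alf)\xi_0 \in \mathfrak{X}(L)$, which is a nowhere-vanishing smooth null section. Using $\nabla_{\xi_0}\xi_0 = 0$ and $\xi_0(\alf) = -1$, a direct computation yields
\[
\nabla_\xi \xi = (A - c\alf)\,\xi_0(A - c\alf)\,\xi_0 = -c(A - c\alf)\,\xi_0(\alf)\,\xi_0 = c(A - c\alf)\xi_0 = c\,\xi,
\]
so that $\gs_\xi \equiv c$ is the sought-after nonzero constant surface gravity.
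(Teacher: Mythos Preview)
Your overall strategy---rescale a null section by the future affine length function $\alf^+$ to obtain constant surface gravity---is exactly the paper's strategy, and your final computation is correct. However, there are two genuine gaps.

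\textbf{The global geodesic section $\xi_0$ cannot be assumed.} You begin by ``fixing a globally defined smooth geodesic null vector field $\xi_0$,'' claiming it arises from ``the standard linear ODE reparametrization along generators, with initial data prescribed coherently using the weak inextendibility of $L$.'' This is not justified. Solving $\xi(f)+\gs_\xi f=0$ along each generator produces a geodesic section only \emph{locally}; gluing requires a smooth choice of initial data on a global cross-section transverse to the generators, and weak inextendibility says nothing about the existence of such a cross-section. The paper gives explicit compact totally geodesic examples with no global geodesic section, so this step is genuinely nontrivial. The paper avoids the issue entirely: it takes an \emph{arbitrary} future-directed section $\xi$ (which always exists, e.g.\ normalized against a timelike $\zeta$), proves $\alf^+_\xi$ is smooth, and then uses the general identity $\xi(\alf^+_\xi)+\gs_\xi\,\alf^+_\xi+1=0$ to conclude that $\alf^+_\xi\cdot\xi$ has surface gravity $-1$. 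Your identity $\xi_0(\alf)=-1$ is just the special case $\gs_{\xi_0}=0$ of this, so once you drop the unneeded geodesic assumption your argument collapses into the paper's.

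\textbf{The smoothness of $\alf^+$ is the whole theorem, and your sketch does not supply it.} You correctly flag this as the main obstacle, but ``I anticipate that this regularity will come from a combination of the null convergence condition, \ldots\ and future prolongation'' is not an argument. The paper's mechanism is concrete and quite different from a direct implicit-function argument on the exit map: near each $x_0$ one passes to a local geodesic section $\eta$ on a strip $V$ and studies the null mean curvature $y_x(s)=H_\eta(\exp_x(s\eta_x))$ via Raychaudhuri. Future weak inextendibility plus future prolonged generators force a focal point at parameter $\alf^+_\eta(x_0)$, and the finite-type hypothesis (bounding $\|A^*\|^2/H^2$) then forces $y_x(s)\to\infty$ as $s\to\alf^+_\eta(x)$. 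The reciprocal $z_x=1/y_x$ satisfies an ODE whose coefficients $R_x$ and $\Gamma_\eta$ extend smoothly past $s=\alf^+_\eta(x)$ (this is precisely where future prolonged generators and the second clause of finite type are used), so $z$ extends smoothly with $z_x(\alf^+_\eta(x))=0$ and $z_x'(\alf^+_\eta(x))=-\Gamma_\eta(x,\alf^+_\eta(x))\neq 0$. The implicit function theorem applied to $z$ (not to the geodesic exit map) then gives smoothness of $\alf^+_\eta$, hence of $\alf^+_\xi$. Your proposal contains none of this, and in particular does not explain how the finite-type condition enters.

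A minor point: finite type together with the null convergence condition gives finiteness of $\alf^+$ at every point, but not uniform boundedness; fortunately your final rescaling does not actually need a uniform bound if you take $c<0$.
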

	
	These geometric conditions on $L$ are spelled out more precisely in section \ref{keykey}, but we briefly describe them here in informal terms. A null geodesic generator of $L$ is {\it future prolonged} when it is either future complete or else it can be extended to the future at least a little beyond $L$, maybe through an isometric extension of $(M,g)$ if necessary. By a {\it future weak extension} of the null hypersurface we mean an extension to the future as an immersed null hypersurface in $(M,g)$ ``along the null geodesic generators'', and thus $L$ is {\it future weakly inextendible} if such a future weak extension does not exist. Finally, the {\it finite type} is a demand of a mildly controlled behaviour of the eigenvalues of the null shape operator, together with
	the existence of some point with  
	strictly negative expansion (equivalently, strictly positive null mean curvature) on each null generator of $L$. While this last condition excludes totally geodesic null hypersurfaces from the scope of our result, a simple but very important non-totally geodesic example where all the hypotheses of Theorem \ref{teoprin} are fulfilled is a past-lightcone in Minkowski spacetime.

	Observe that if we rescale the null section $\xi \in \mathfrak{X}(L)$ by nowhere-zero function $f\in C^\infty(L)$, then the surface gravity changes in a simple way. In fact, the new surface gravity is just
	$\gs_\xi f+df(\xi)$. Thus, the problem of finding a null section with constant surface gravity is equivalent to finding a nowhere-zero global solution $f\in C^{\infty}(L)$
	to the first-order linear PDE 
	\begin{equation}\label{basicpde}
		\xi(f) + \gs_\xi\,f=c
	\end{equation}
	for some constant $c\in \mathbb{R}$. This can, of course, always be done locally (for any given constant $c$), but it is a globally nontrivial issue. Indeed, a global solution may not exist: in Example \ref{ej2}  a totally geodesic null hypersurface is exhibited which does not admit any globally defined null section with constant (zero or nonzero) surface gravity.
	
	Then, there is the issue of which constants $c\in \mathbb{R}$ can arise as surface gravities on $L$. By further rescaling it is easy to see that only the cases $c\neq0$ and $c=0$ need to be considered. Example \ref{ejemplocono} shows that it may be perfectly possible to find a null section of zero constant surface gravity {\it and} another with nonzero constant surface gravity on the same null hypersurface, while Examples \ref{ex1} and \ref{ej1} show that there are totally geodesic null hypersurfaces where a null section with zero constant surface gravity exists, but one cannot find a null section with nonzero constant surface gravity, and vice-versa.

	The strategy of the proof of Theorem \ref{teoprin} is the following.  Fixed a null section $\xi\in\mathfrak{X}(L)$ and 
	an arbitrary point $x_0$ of $L$, we construct a geodesic null section $\eta$  around a convenient neighborhood of $x_0$ with
	$\eta_{x_0}=\xi_{x_0}$. We prove that the null expansion associated to $\eta$ satisfies a certain differential equation derived from the Raychaudhuri equation. By using the null convergence condition we prove that this expansion must diverge at a finite parameter value, being determined precisely by the so-called future affine length function associated to $\eta$. Then, the rest of hypotheses of the theorem are used to guarantee that the coefficients of the mentioned differential equation can be conveniently extended beyond their initial domain of definition. This allows us to establish an association between the future affine length function of $\eta$ and  the boundary of the region where the solution of the differential equation associated with the inverse of the expansion of $\eta$ vanishes. Finally, a standard application of the Inverse Function Theorem leads to the conclusion that the mentioned boundary  is smooth, and consequently, the same conclusion holds for the affine length function associated to $\eta$, or equivalently, to $\xi$.  The differentiability of the affine length function associated to $\xi$ is enough to get the existence of a null section with nonzero constant surface gravity.

	The rest of this paper is organized as follows. In section \ref{subsec1.1} we recall some basic concepts pertaining to the geometry of null hypersurfaces that will be used throughout the paper. In section \ref{sec2} we introduce the so-called parameter and affine length functions, crucial in the proof of our main result, and study their main properties. In section \ref{keykey} we give the precise definitions of the concepts appearing in the main hypotheses of Theorem \ref{teoprin}. In section \ref{ex} we provide the aforementioned illustrative examples clarifying how those hypotheses affect the results of this paper. We finally give the proof of Theorem \ref{teoprin} in section \ref{fin}.

	\section{Geometry of null hypersurfaces}\label{subsec1.1}
	
	In this section we review  some basic concepts on the geometry of null hypersurfaces that will be used throughout. Since this is fairly standard material we do so only briefly and without proofs, mostly to establish self-contained notation and terminology for the rest of the paper. 
	We assume the reader is familiarized with the basic geometry of Lorentzian manifolds as given in \cite{oneill}.  
	
	An immersion (resp. embedding) $\phi:\Sigma^k \rightarrow M$ is a {\it null immersion} (resp. {\it null embedding}) if the pullback $\phi^{\ast}g$ degenerates everywhere on $\Sigma$. Since we are dealing with Lorentzian signature, this implies that for every $x\in \Sigma$
	$$ \ell_x:=d\phi_x(T_x\Sigma) \cap (d\phi_x(T_x\Sigma))^{\perp}\subset T_{\phi(x)}M $$ 
	is one-dimensional, so any vector in $\ell_x$ is either zero or null, and moreover any $v \in d\phi_x(T_x\Sigma)\setminus \ell_x$ is spacelike. 
	Throughout this paper,  $\zeta\in\mathfrak{X}(M)$   denotes  a timelike future-directed vector field, which allows us to  pick the unique $\xi \in \mathfrak{X}(\Sigma)$ such that 
	\begin{align*}
		d\phi_x(\xi_x) \in \ell _x \\
		g(d\phi_x(\xi_x),\zeta_{\phi(x)})=-1
	\end{align*}
	for all $x\in \Sigma$, which will be a future-directed null section of $\phi$.
	
	Observe that in the codimension one case, i.e., when $k=n-1$, we have 	\begin{equation}\label{uso}
		d\phi_x(T_x\Sigma) = (\ell _x)^{\perp} \quad \forall x\in \Sigma. 
	\end{equation}
	Although much of what we shall present here can be straightforwardly extended to the context of codimension-one null immersions, for simplicity we carry out all our discussion in this paper for a codimension-one null embedding $i:L\hookrightarrow M$. Following standard usage, we shall identify $L$ with its image under $i$ and omit any mention to $i$ in what follows, unless there is no risk of confusion. In particular, given any null section $\xi \in \mathfrak{X}(L)$ of $L$, equation \eqref{uso} is written as 
	\begin{equation}\label{uso2}
		T_xL = (\xi_x)^{\perp}, \quad \forall x\in L.
	\end{equation}
	A simple computation shows that $\nabla_{\xi}\xi \in (T_xL)^{\perp}$ (see, e.g., \cite[Section 3]{Kupeli}); thus, by equation \eqref{uso2} there exists a unique smooth function $\gs_\xi \in C^{\infty}(L)$, called the {\it surface gravity} of $L$ associated with $\xi$, such that 
	\begin{equation}\label{uso3}
		\nabla_\xi \xi = \gs_\xi \cdot \xi.
	\end{equation}
	The latter equation implies that the maximal integral curves of $\xi$ are always {\it null pregeodesics}. Therefore, any null hypersurface $L$ is ruled by a family of null geodesics (not depending on the choice of the null section $\xi$), unique up to affine reparametrizations, called the {\it null geodesic generators} of $L$. In other words,  the integral curves of any null section coincide with the null geodesic generators up to reparametrization. We shall always assume that a given null geodesic generator is inextendible in $L$, i.e., although it may well have a future/past endpoint $p \in M$, we will have $p \notin L$ in that case. 
	
	The {\it null second fundamental form} of $L$ with respect to a fixed null section $\xi\in\mathfrak{X}(L)$ is the symmetric $(0,2)$-tensor on $L$ given by
	$$
	B_\xi(X,Y)=-g(\nabla_X\xi,Y)
	$$
	for all $X,Y\in\mathfrak{X}(L)$.
	The {\it null mean curvature of $L$ with respect to $\xi$} is the trace of $B_\xi$. Concretely, it is the function $H_\xi:L\rightarrow \mathbb{R}$ given at every $x\in L$ by
	$$
	H_\xi(x)=\sum_{i=1}^{n-2}B_\xi(e_i,e_i),
	$$
	where $\{e_1,\ldots,e_{n-2}\}\subset T_xL$ is any orthonormal set of (necessarily spacelike) vectors which together with $\xi_x$ form a basis of $T_xL$. One can easily check that the null mean curvature is independent of the choice of such an orthonormal set.
	The {\it expansion scalar} associated with $\xi$ is $\theta_\xi= - H_\xi$. As the name indicates, the expansion scalar measures the divergence of the integral lines of $\xi$  and thus of the null geodesic generators.

	If $B_\xi=0$, then $L$ is said to be {\it totally geodesic}. This term carries the usual geometric meaning in this context: $L$ is totally geodesic if and only if geodesics with initial velocity vector tangent to $L$ stay in $L$ at least initially, or equivalently, iff the restriction of the Levi-Civita connection $\nabla$ to $L$ defines a connection thereon \cite[Theorem 37]{Kupeli}. For instance, any Killing horizon is totally geodesic. If the null second fundamental form is pure trace, i.e., $B_\xi=\frac{H_\xi}{n-2} g$, then $L$ is said to be {\it totally umbilic}. The null cones in generalized Robertson-Walker spacetimes (and in particular in spacetimes of constant sectional curvature) are examples of totally umbilic null hypersurfaces which are not totally geodesic, \cite{GutOleat}.
	
	A {\it screen distribution} on $L$ is a rank $n-2$ vector subbundle $\s$ of $TL$ whose fibers are everywhere transverse to a null section $\xi$.  Observe that a screen distribution $\s$ yields a direct sum decomposition 
	$$TL= \s \oplus {\rm span} (\xi).$$
	
	Thus, the metric $g$ induces a positive-definite bundle metric $\sigma$ on $\s$ given by $$\sigma(X,Y) = g(X,Y), \quad \forall X,Y\in \Gamma(\s).$$
	Since $B_\xi(\xi,\cdot)=0$, we can regard $B_\xi$ as a symmetric $(0,2)$ tensor on $\s$. In this case, the second fundamental form can also be expressed as $B_\xi(X,Y)=\sigma(A_\xi^*(X),Y)$, where $A_\xi^*:\Gamma(\s)\rightarrow \Gamma(\s)$ is a $C^{\infty}(L)$-linear self-adjoint (with respect to $\sigma$) endomorphism field. It is often convenient to extend $A_\xi^*$ to $TL$ by linearity by defining $A_\xi^*(\xi)=0$. Thus, we still have $A^*_\xi(U)\in\Gamma(\s)$ for all $U\in\mathfrak{X}(L)$, and we can decompose
	\begin{equation}\label{automorpheq}
		\nabla_U\xi=-A_\xi^*(U)-\tau_\xi(U)\xi.
	\end{equation}
	The ($\s$-dependent) one-form $\tau_\xi$ is called a {\it rotation one-form}. Note that $\gs_\xi=-\tau_\xi(\xi)$.
	
	Observe that it is often the case that geometric objects defined on a null hypersurface depend on the chosen null section. Indeed, if we take another null section $\eta=f \xi$, where $f\in C^\infty(L)$ is a nowhere-vanishing function, then 
	\begin{align}
		\tau_{\eta}&=\tau_\xi-\frac{df}{f}, \nonumber \\
		\gs_{\eta}&=f \gs_\xi+df(\xi), \label{eqcambiogs}\\
		B_{\eta}&=fB_\xi, \label{eqcambioB} \\
		H_{\eta}&=fH_\xi. \label{eqcambiocm}
	\end{align}
	However, the last two of these equations show clearly that being totally geodesic or totally umbilic are intrinsic properties of $L$ that do not depend on any choice of null section.

	An important analytic feature of a null hypersurface is the so-called {\it Raychaudhuri equation}, which asserts that 
	\begin{align}\label{eqRay}
		{\rm Ric}(\xi,\xi)=\xi(H_\xi)-\gs_\xi H_\xi-||A^*_\xi||^2.
	\end{align}
	The term
	$||A^*_\xi||^2$ is called  the {\it shear scalar} and it is given by  
	$||A^*_\xi||_x^2=\sum_{i=1}^{n-2}g(A^*_\xi(e_i),A^*_\xi(e_i))$, where $\{e_1,\ldots,e_{n-2}\}$ is an orthonormal basis of $\s_x$ (see for example \cite{Galloway1,Galloway2,Olea1}). It is easy to show that $||A^*_\xi||^2$ only depends on the choice of the null section, not on the screen.
	
	Since the fiberwise endomorphisms $A^*_\xi(x):\s_x\rightarrow \s_x$ are self-adjoint with respect to the inner product $\sigma_x$, they are diagonalizable. Therefore,
	\begin{equation}\label{ineqh}
		\frac{1}{n-2}H_\xi^2\leq ||A_\xi^*||^2.
	\end{equation}
	If, in addition, $(M,g)$ satisfies the \textit{null convergence condition}, i.e.
	\[
	{\rm Ric}(u,u)\geq0\qquad \hbox{for all null vectors $u\in TM$,}
	\]
	then we get the differential inequality 
	\begin{align}\label{ineqRay}
		0\leq \xi(H_\xi)-\gs_\xi H_\xi-\frac{1}{n-2}H_\xi^2.
	\end{align}
	The Raychaudhuri equation has many interesting consequences in the context of null hypersurfaces. For instance, it have been used to characterize null cones \cite{GutOleanc1,GutOleanc2} and to ensure that, under some curvature conditions, a null hypersurface is totally geodesic \cite{Olea1}.

	\section{The final parameter and affine length functions}\label{sec2}
	
	Let a future-directed null section $\xi\in\mathfrak{X}(L)$ be given on the null hypersurface $L$. We proceed to define in this section a couple of special $\xi$-dependent functions on $L$ that will be crucial in the proof of our main result and whose technical properties will be also explored here. 
	
	Denote by $ \Phi_{\xi}:\mathcal{A}_\xi\subset L\times \mathbb{R} \rightarrow L$ the global flow of $\xi$ on $L$, where $\mathcal{A}_\xi$ is its maximal definition domain. Observe that $\mathcal{A}_\xi$ is open in $L\times \mathbb{R}$ and it contains $L\times \{0\}$. Consider also the exponential map $\exp: \mathcal{D}\subset TM \rightarrow M$ of $(M,g)$, where $\mathcal{D}$ is its maximal definition domain. We know that $\mathcal{D}$ is open in the tangent bundle $TM$, contains the image of the zero section of $TM$ and 
	$$t\in [0,1],\; v\in \mathcal{D}\;\; \Rightarrow\;\; t\cdot v \in \mathcal{D}.$$
	Thus, consider the set
	$$\mathcal{B}_\xi:= \{(x,s) \in L\times \mathbb{R}  :  s\cdot \xi_x \in \mathcal{D} \mbox{ and }\exp(s\cdot\xi_x)\in L \}\subset L\times \mathbb{R},
	$$
	which we show below is open, and define the smooth function $\Psi_{\xi}:\mathcal{B}_\xi \rightarrow L$ given by 
	$$\Psi_{\xi}(x,s):=\exp(s\cdot\xi_x), \quad \forall (x,s) \in \mathcal{B}_\xi.$$ 
	
	We now define the {\it final parameter function} $\fpf_\xi^+:L\rightarrow (0,+\infty]$ and the {\it future affine length function} $\alf_\xi^+:L \rightarrow (0,+\infty]$ by
	\begin{align*}
		\fpf_\xi^+(x)&:=\sup\{t>0:  \{x\}\times[0,t)\subset  \mathcal{A}_\xi\}, \\
		\alf_\xi^+(x)&:=\sup\{s>0:  \{x\}\times[0,s)\subset  \mathcal{B}_\xi\}
	\end{align*}
	for all $x\in L$. Note that both functions are indeed strictly positive and either of them can take an infinite value at some point, and maybe even at all points. For example, if $\xi$ is a complete vector field, then $\fpf^+_\xi(x)=\infty$ for all $x\in L$, while $\alf_\xi^+(x) =+\infty$ if and only if the geodesic null generator of $L$ through $x$ is future-complete.
	
	We analogously define the {\it initial parameter function} and {\it past affine length function} $\fpf_\xi^-,
	\alf_\xi^-:L \rightarrow [-\infty,0)$, respectively, by
	\begin{align*}
		\fpf_\xi^-(x)&:=\inf\{t<0: \{x\}\times (t,0]\subset  \mathcal{A}_\xi\}, \\
		\alf_\xi^-(x)&:=\inf\{s<0: \{x\}\times (s,0]\subset  \mathcal{B}_\xi\}
	\end{align*}
	for all $x\in L$.
	It follows immediately from these definitions that the relations $\alf_{\xi}^-(x)=-\alf_{-\xi}^+(x)$ and $\fpf_{\xi}^-(x)=-\fpf_{-\xi}^+(x)$ hold. 
	Observe, in addition, that 
	\begin{align}
		\mathcal{A}_\xi &=\{(x,t)\in L\times\mathbb{R}:\fpf_\xi^-(x)<t<\fpf^+_\xi(x) \} \label{ca1} \\
		\mathcal{B}_\xi &=\{(x,s)\in L\times\mathbb{R}:\alf_\xi^-(x)<s<\alf^+_\xi(x) \}. \label{ca2}
	\end{align}
	
	\begin{lemma} Let $S$ be any metric space, and let $f_-:S\rightarrow [-\infty,0)$ and $f_+:S\rightarrow (0,\infty]$ be two functions on $S$. Let $A\subset \mathbb{R}\times S$ be given by
		$$
		A=\{(t,x)\in \mathbb{R}\times S:f_-(x)<t<f_+(x)\}.
		$$
		Then, $A$ is open (in the product topology) if and only if $f_-$ is upper semicontinuous and $f_+$ is lower semicontinuous on $S$.
	\end{lemma}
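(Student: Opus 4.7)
The plan is to prove both directions by elementary arguments using the topological product structure of $\mathbb{R} \times S$, treating the two semicontinuity conditions symmetrically.

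For the forward implication, I would assume $A$ is open and argue that $f_+$ is lower semicontinuous at an arbitrary $x_0 \in S$; the upper semicontinuity of $f_-$ follows by an essentially identical argument. Given $\alpha < f_+(x_0)$, if $\alpha \leq 0$ then lower semicontinuity is automatic since $f_+ > 0$ on all of $S$. If $\alpha > 0$, then because $f_-(x_0) < 0 < \alpha < f_+(x_0)$, the point $(\alpha, x_0)$ lies in $A$; openness of $A$ together with the definition of the product topology yields a basic open neighborhood $I \times U \subset A$ of $(\alpha, x_0)$, and every $x \in U$ then satisfies $\alpha < f_+(x)$, as needed.

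For the reverse implication, I would take an arbitrary $(t_0, x_0) \in A$ and produce an explicit product-type neighborhood contained in $A$. Choose intermediate scalars $t_- \in (f_-(x_0), t_0)$ and $t_+ \in (t_0, f_+(x_0))$; such choices exist even if $f_-(x_0) = -\infty$ or $f_+(x_0) = +\infty$. Upper semicontinuity of $f_-$ supplies a neighborhood $U_-$ of $x_0$ on which $f_- < t_-$, and lower semicontinuity of $f_+$ supplies a neighborhood $U_+$ of $x_0$ on which $f_+ > t_+$. Setting $U := U_- \cap U_+$, the open rectangle $(t_-, t_+) \times U$ is a neighborhood of $(t_0, x_0)$ contained in $A$, proving $A$ is open.

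There is no real obstacle here; the only point that needs any care is allowing the extended-real values $\pm \infty$ for $f_\pm$, which is handled uniformly by the convention that any finite threshold lies strictly below $+\infty$ and strictly above $-\infty$, so the intermediate-value picks above remain possible without case splitting.
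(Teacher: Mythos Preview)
Your proof is correct and follows essentially the same approach as the paper. The reverse implication is argued identically (pick an intermediate interval and intersect the two semicontinuity neighborhoods); for the forward implication the paper instead argues by contradiction via sequences, whereas your direct argument using the open-set formulation of lower semicontinuity is a cosmetic variant with the mild bonus that it does not actually use the metric structure of $S$.
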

	\begin{proof}
		Suppose that $A$ is open. Fix $x_0\in S$ and by way of contradiction suppose for example that $f_+$ is not lower semicontinuous at $x_0$. In that case, pick a sequence $(x_k)_{k\in \mathbb{N}}$ in $S$ with $x_k\rightarrow x_0$ such that $\liminf{f_+(x_k)} =: c< f_+(x_0)$, which implies $(x_0,c) \in A$. In particular, $0\leq c< +\infty$ and we can assume, up to passing to a subsequence, that $f_+(x_k) \rightarrow c$. Since $A$ is open, we have $(x_k,c+\delta) \in A$ for any large enough $k$ and a fixed small enough $\delta>0$. Hence, one obtains 
		$$c+\delta < f_+(x_k)\quad \hbox{for $k$ large enough,}$$
		which in turn implies $c+\delta\leq c$, an absurd.

		Conversely, take $(x_0,t_0)\in A$. Since $f_-(x_0)<t_0<f_+x_0)$ we can pick $\varepsilon>0$ such that $f_-(x_0)<t_0-\varepsilon<t_0+\varepsilon<f_+(x_0)$.  Using that  $f_-$ is upper semicontinuous and $f_+$ is lower semicontinuous, there is a open set $x_0\in U\subset S$ such that $f_-(x)<t_0-\varepsilon<t_0+\varepsilon<f_+(x)$ for all $x\in U$. Therefore, $(x_0,t_0)\in U\times (t_0-\varepsilon,t_0+\varepsilon)\subset A$ and we conclude that $A$ is open.
	\end{proof}

	The functions $\fpf^{\pm}_\xi, \alf^{\pm}_\xi$ are {\it not} continuous in general. However, since $\mathcal{A}_\xi$ is open it follows from the previous lemma that $\fpf^\pm_\xi$ is lower/upper  semicontinuous. We now show  that $\alf^\pm_\xi$ is lower/upper semicontinuous, and thus that $\mathcal{B}_\xi$ is open in $L\times \mathbb{R}$.

	\begin{lemma}\label{lemalsc}
		Let $(M,g)$ be a spacetime, $L\subset M$ a null hypersurface therein and $\xi\in\mathfrak{X}(L)$ a future-directed null section of $L$. Then $\alf^+_\xi$ is lower semicontinuous and $\alf^-_\xi$ is upper semicontinuous. In particular, $\mathcal{B}_\xi$ is an open subset of $L\times\mathbb{R}$.
	\end{lemma}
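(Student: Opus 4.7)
My plan is to prove the lower semicontinuity of $\alf^+_\xi$ at an arbitrary $x_0 \in L$; the argument for $\alf^-_\xi$ is symmetric, and the openness of $\mathcal{B}_\xi$ then follows from the preceding lemma together with \eqref{ca2}. So fix $c \in [0, \alf^+_\xi(x_0))$; the goal is to exhibit a neighborhood $U$ of $x_0$ in $L$ with $\alf^+_\xi(x) > c$ for every $x \in U$. The segment $\Gamma := \{\exp(s\xi_{x_0}) : s \in [0,c]\} \subset L$ is compact, and the argument is local-to-global along $\Gamma$.

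The local step I would prove is this: every $z \in L$ admits a neighborhood $V_z \subset L$ and some $\delta_z > 0$ such that $\exp(t\xi_y) \in L$ for all $(y, t) \in V_z \times (-\delta_z, \delta_z)$. For this, I would choose a codimension-one submanifold $\Sigma \subset L$ through $z$ transverse to $\xi_z$ and solve the linear ODE $\xi(\ln f) = -\gs_\xi$ along integral curves of $\xi$ with $f|_\Sigma \equiv 1$. By \eqref{eqcambiogs}, the resulting smooth positive $f$ on a flow tube $W_z$ of $\Sigma$ makes $\eta := f\xi$ into a null section with $\gs_\eta \equiv 0$, whose integral curves are therefore genuine affinely parametrized null geodesics. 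Consequently $\Phi_\eta(y,t) = \exp(t\eta_y) \in W_z \subset L$ on the (open) flow domain of $\eta$, and the identity $\exp(t\xi_y) = \Phi_\eta(y, t/f(y))$ combined with the continuity of $f$ yields the required $V_z$ and $\delta_z$.

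For the global step, I cover $\Gamma$ by finitely many neighborhoods $V_{z_1}, \ldots, V_{z_N}$ from the local step and choose a partition $0 = s_0 < s_1 < \cdots < s_k = c$ fine enough that each $\gamma_{x_0}(s_j)$ (where $\gamma_x(s) := \exp(s\xi_x)$) lies in some $V_{z_{i(j)}}$. The bootstrap observation is that if $\gamma_x([0, s_j]) \subset L$, then the velocity $\gamma'_x(s_j)$ is automatically tangent to $L$: the subset of $[0, s_j]$ at which $\gamma'_x(s) \in T_{\gamma_x(s)}L$ is nonempty, open (by local coincidence of $\gamma_x$ with a null generator) and closed (by smoothness of $\gamma_x$ and continuity of tangent spaces of $L$). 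Hence $\gamma'_x(s_j) = \mu_j(x)\,\xi_{\gamma_x(s_j)}$ for a positive factor $\mu_j$ depending continuously on $x$, and the continuation $\gamma_x(s_j + \tau) = \exp(\tau \mu_j(x)\,\xi_{\gamma_x(s_j)})$ stays in $L$ for $\tau \in [0, s_{j+1}-s_j]$ by the local control at $V_{z_{i(j)}}$, provided $x$ belongs to a small enough neighborhood $U_j$ of $x_0$ (chosen so that $\gamma_x(s_j) \in V_{z_{i(j)}}$ and $(s_{j+1}-s_j)\mu_j(x) < \delta_{z_{i(j)}}$). Intersecting the finitely many $U_j$ produces the desired $U$.

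The main technical obstacle is the bootstrap: the $\xi$-affine parameter on the continuation past the junction $\gamma_x(s_j)$ is $\mu_j(x)\tau$ rather than $\tau$, so the local control must be applied with an $x$-dependent rescaling. This is addressed by the continuity of $\mu_j$ (via smooth dependence of parallel transport and of $\xi$) together with the finiteness of the partition, which together guarantee that a single neighborhood $U$ works at all junctions simultaneously.
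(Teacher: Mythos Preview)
Your argument is correct, but it follows a genuinely different route from the paper's. The paper argues by contradiction: assuming a sequence $x_k\to x_0$ with $\alf^+_\xi(x_k)\to T<\alf^+_\xi(x_0)$, it uses only that the exponential domain $\mathcal{D}\subset TM$ is open to conclude that $y_k:=\exp(\alf^+_\xi(x_k)\xi_{x_k})$ is eventually well-defined and converges to $p:=\exp(T\xi_{x_0})\in L$. A single slice chart $(z_1,\ldots,z_n)$ with $L\cap W=\{z_1=0\}$ then forces $z_1(y_k)=0$ by continuity from the left, contradicting $y_k\notin L$. No local geodesic reparametrization, no covering of $\Gamma$, and no induction are needed.

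Your approach is more constructive: you build the neighborhood $U$ explicitly by locally straightening $\xi$ to a geodesic section, covering the compact arc $\Gamma$, and bootstrapping along a fine partition while tracking the affine rescaling factors $\mu_j(x)$. This is longer and requires more bookkeeping (you correctly flag the $x$-dependent rescaling as the delicate point and handle it via continuity of $\mu_j$ and finiteness of the cover), but it has the virtue of being entirely intrinsic to $L$: you never need to analyze what happens at or beyond the endpoint $s=\alf^+_\xi(x_k)$, which is where the paper's argument is most delicate. Either proof is acceptable; the paper's is shorter, yours is more self-contained.
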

	\begin{proof}
		We only carry out the proof of the lower semicontinuity of $\alf^+_\xi$, since the corresponding proof for the upper semicontinuity of $\alf^-_\xi$ is entirely analogous. 
		
		Assume that $\alf^+_\xi$ is not lower semicontinuous at $x_0$. Then there is a number $K_0\in\mathbb{R}$ with $K_0<\alf^+_\xi(x_0)$ and a sequence $(x_k)_{k\in \mathbb{N}}\subset L$ which converges to $x_0$ such that $\alf^+_\xi(x_k)\leq K_0<\infty$ for all $k\in\mathbb{N}$. Again up to passing to a subsequence, we can suppose $\alf^+_\xi(x_k)$ converges to some $0\leq T<\alf^+_\xi(x_0)$. By \eqref{ca2} this means that $(x_0,T) \in \mathcal{B}_\xi$, and consequently $T\cdot x_0 \in \mathcal{D}$ and $p:=\exp(T\cdot \xi_{x_0})\in L$. Pick an open neighbourhood $p\in W\subset M$ and a coordinate system $(z_1,\ldots,z_n)$ defined on $W$ such that 
		\begin{equation}
			\label{dyc}
			L\cap W=\{q\in W: z_1(q)=0\}.
		\end{equation}
		Since $\{\alf^+_\xi(x_k)\cdot x_k\}\rightarrow T\cdot x_0\in \mathcal{D}$, then $\alf^+_\xi(x_k)\cdot \xi_{x_k} \in \mathcal{D}$ for $k$ large enough. In other words, the null geodesic $\gamma_k: t\in [0,\alf^+_\xi(x_k)] \mapsto \exp\left( t\cdot \xi_{x_k}\right)\in M$ is well-defined, but $y_k:=\exp(\alf^+_\xi(x_k)\cdot \xi_{x_k})\notin L$. Since $\{y_k\}\rightarrow p\in W$, eventually $y_k \in W$, in which case we have $\gamma_k(t) \in L\cap W$ for all $t$ close enough to $\alf^+_\xi(x_k)$. By continuity (recall \eqref{dyc}), we have
		$
		z_1(y_k)=0$. However, since $y_k \notin L$, we also have $z_1(y_k) \neq 0$, a contradiction.
	\end{proof}

	\begin{example}\label{adhocexample}
		{\rm This simple example illustrates the fact that $\alf^{\pm}_\xi,\fpf^{\pm}_\xi$ are not continuous in general. Let $$(M,g) = (\mathbb{R}^3, -dt^2+ dx^2+dy^2)$$
			and 
			$$L=\{(t,t,y) \in \mathbb{R}^3 \, : \, t,y\in \mathbb{R}\}\setminus \{(t,t,0) \in \mathbb{R}^3 \, : \, t\geq 1\}.$$
			Pick $\xi = \partial_t + \partial_x$. Since this is a geodesic field, i.e., $\nabla _\xi \xi =0$, we have $\alf^+_\xi = \fpf^+_\xi$. Consider the sequence $\{p_k\}$ with  
			$$p_k= (0,0,1/k) \quad k\in \mathbb{N}.$$
			Then $\{p_k\}$ converges to the origin, where $\alf^+_\xi(0,0,0) = \fpf^+_\xi(0,0,0) =1$, but clearly $\alf^+_\xi (p_k)= \fpf^+_\xi(p_k) =+\infty$, i.e, 
			$$\limsup{\alf^+_\xi (p_k)} = \limsup{\fpf^+_\xi(p_k) }=+\infty >1 = \alf^+_\xi(0,0,0) = \fpf^+_\xi(0,0,0). $$
			Thus $\alf^+_\xi,\fpf^+_\xi$ are not upper semicontinuous. By slightly modifying this example we can also show that $\alf^-_\xi,\fpf^-_\xi$ are not lower semicontinuous in general.}
	\end{example}
	\medskip
	
	We now indicate how a change on the null section affects the final parameter and future affine length functions. 
	
	\begin{lemma}\label{cambiofunciones} Let $L$ be a null hypersurface in the spacetime $(M,g)$. Let $\xi\in\mathfrak{X}(L)$ be a future-directed null section and $f\in C^\infty(L)$ a positive function. 
		Then
		\begin{align*}
			\alf_{f\xi}^+(x)=\frac{\alf^+_{\xi}(x)}{f(x)}\quad\hbox{and}\quad
			\fpf_{f\xi}^+(x)=\int_0^{\fpf_\xi^+(x)}\frac{1}{f(\Phi_\xi(x,t))}dt\qquad\hbox{for all $x\in L$.}
		\end{align*}
\end{lemma}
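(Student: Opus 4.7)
The plan is to verify each identity by interpreting the scaling $\xi\mapsto f\xi$ as a reparametrization, but at two different levels: for $\alf^+$ as a reparametrization of individual null geodesics in $M$, and for $\fpf^+$ as a reparametrization of integral curves of the vector field on $L$.

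For the affine length identity, I would use the standard homogeneity of the exponential map. Given $x\in L$ and $s>0$, writing $(f\xi)_x=f(x)\xi_x$ we have $s\cdot (f\xi)_x=(sf(x))\cdot \xi_x$, so $s\cdot(f\xi)_x\in\mathcal{D}$ and $\exp(s\cdot(f\xi)_x)\in L$ if and only if $sf(x)\cdot\xi_x\in\mathcal{D}$ and $\exp(sf(x)\cdot\xi_x)\in L$, i.e.\ iff $(x,sf(x))\in \mathcal{B}_\xi$. Therefore
\[
\{s>0:\{x\}\times[0,s)\subset\mathcal{B}_{f\xi}\}=\{s>0:sf(x)<\alf_{\xi}^+(x)\},
\]
and taking the supremum yields $\alf^+_{f\xi}(x)=\alf^+_\xi(x)/f(x)$ (with the usual convention $\alpha/f(x)=+\infty$ when $\alpha=+\infty$).

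For the final parameter identity, I would invoke the elementary ODE fact that the flow of a rescaled vector field is obtained by reparametrizing the original flow. Concretely, let $\alpha:[0,\fpf_\xi^+(x))\to L$, $\alpha(t)=\Phi_\xi(x,t)$, be the maximal forward integral curve of $\xi$ through $x$. Define $\tau:[0,T)\to[0,\fpf_\xi^+(x))$ as the unique maximal solution of the ODE
\[
\tau'(s)=f(\alpha(\tau(s))),\qquad \tau(0)=0,
\]
and set $\beta(s):=\alpha(\tau(s))$. Then $\beta'(s)=\tau'(s)\,\xi_{\alpha(\tau(s))}=f(\beta(s))\,\xi_{\beta(s)}=(f\xi)_{\beta(s)}$, so $\beta$ is an integral curve of $f\xi$, and by uniqueness $\beta=\Phi_{f\xi}(x,\cdot)$ on its common domain. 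Since $f>0$, the map $\tau$ is a strictly increasing diffeomorphism of $[0,T)$ onto a subinterval $[0,T^*)\subset[0,\fpf_\xi^+(x))$, and by maximality $T^*=\fpf_\xi^+(x)$ and $T=\fpf_{f\xi}^+(x)$. Separating variables in the ODE gives $ds=d\tau/f(\alpha(\tau))$, and integrating over $[0,\fpf_{f\xi}^+(x))$ produces the stated integral formula.

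The only mild subtlety, and the step I would single out as the main thing to verify carefully, is the correspondence of the maximal intervals under $\tau$: one must check that $\tau$ cannot blow up in finite $s$ before reaching $\fpf_\xi^+(x)$ (it cannot, since $f\circ\alpha$ stays positive and locally bounded on compact subintervals of $[0,\fpf_\xi^+(x))$), and that conversely $\beta$ cannot be extended past $T$ (any such extension would lift to an extension of $\alpha$ past $\fpf_\xi^+(x)$). Both cases follow from the monotone change of variables together with the uniqueness of maximal integral curves, completing the argument for $\fpf_{f\xi}^+$ and hence for the lemma.
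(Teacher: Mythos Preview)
Your proof is correct and follows essentially the same approach as the paper's. The only cosmetic difference is the direction of the reparametrization: the paper starts from the maximal integral curve $\beta$ of $f\xi$ and defines the change of parameter explicitly as $a(s)=\int_0^s f\circ\beta$, whereas you start from $\alpha$ and obtain the same function $\tau$ (your $\tau$ is the paper's $a$) implicitly via the ODE $\tau'=f\circ\alpha\circ\tau$; the verification that the reparametrization is a bijection between the two maximal forward domains is the same in both cases.
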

\begin{proof}
	The derivation of the formula for $\alf^+_{f\xi}$ is straightforward from the definition of the future affine length function. Suppose now $\alpha:\left(\fpf_{\xi}^-(x), \fpf_{\xi}^+(x) \right)\rightarrow L$ is the (maximal) integral curve of $\xi$ with $\alpha(0)=x$. Let $\beta:\left(\fpf_{f\xi}^-(x), \fpf_{f\xi}^+(x) \right)\rightarrow L$ be the integral curve of $f\xi$ such that $\beta(0) =x$, and define the function $a: [0, \fpf_{f\xi}^+(x)) \rightarrow \mathbb{R}$ 
	given by
	$$a(s) = \int^s_0 f\circ \beta(\lambda) d\lambda, \quad \forall s \in [0, \fpf_{f\xi}^+(x)).$$
	In particular, $a' = f\circ \beta > 0$. Thus, $a$ is increasing and $a(0)=0$. Moreover, $a[0, \fpf_{f\xi}^+(x)) = [0,A)$ for some $0<A\leq +\infty$. On the other hand, $a^{-1} : [0,A) \rightarrow [0, \fpf_{f\xi}^+(x))$ is well-defined and increasing. A straightforward computation shows that $\gamma := \beta \circ a^{-1}$ is an integral curve of $\xi$ starting at $x$, and consequently,
	$$[0,A) = [0,\fpf_{\xi}^+(x)) \quad \mbox{and}\quad \gamma \equiv \alpha|_{[0,A)}.$$
	We conclude that $\beta = \alpha \circ a$. Then, we compute 
	
	\begin{eqnarray}\fpf^+_{f\xi}(x) = \int_0^{\fpf^+_{f\xi}(x)}ds &=& \int_0^{\fpf^+_{f\xi}(x)}\frac{a' (s)}{f(\beta(s))}[\equiv 1]ds \nonumber \\
		&=&\int_0^{\fpf^+_{f\xi}(x)}\frac{a' (s)}{f\circ\alpha (a(s))}ds = \int_0^{\fpf^+_{\xi}(x)}\frac{dt}{f\circ\alpha (t)},
	\end{eqnarray}
	where we have made the change of variables $t=a(s)$ at the last step. Since $\alpha(t) = \Phi_\xi(x,t)$, we obtain the desired formula. 
\end{proof}

\begin{remark}\label{rmkrel}
	{\rm Lemma \ref{cambiofunciones} shows that if the future affine length function associated to a given null section on the null hypersurface $L$ is either smooth or finite at some $x\in L$, then these respective properties hold for any other null section. However, for the final parameter function, although the same statement still holds for smoothness, it may fail for finiteness.}
\end{remark}

Since the integral curves of the null section $\xi$ are pregeodesics, the maps $\Phi_\xi$ and $\Psi_\xi$ are related in a nice way as follows. 
\begin{proposition}\label{smallhfunction}
	Let $L$ be a null hypersurface in the spacetime $(M,g)$ and $\xi\in\mathfrak{X}(L)$ a future-directed null section. For each $x\in L$ there exists a unique smooth increasing diffeomorphism
	$$
	h_x:\left(\alf^-_\xi(x),\alf^+_\xi(x)\right)\rightarrow 
	\left(\fpf^-_\xi(x), \fpf^-_\xi(x)\right)
	$$
	such that $h_x(0)=0$, $h'_x(0)=1$ and $\Psi_\xi(x,s)=\Phi_\xi(x,h_x(s))$ for all $s \in \left(\alf^-_\xi(x),\alf^+_\xi(x)\right)$. Indeed, $h_x$ is a maximally extended solution  to the ODE 
	\begin{equation}\label{eqh2}
		\frac{d^2 h_x}{ds^2}(s)+\gs_\xi(\Psi_\xi(x,s)) \left(\frac{d h_x}{ds}(s)\right)^2=0
	\end{equation}
	with the stated initial conditions. Furthermore, the function 
	$$
	h: (x,s) \in \mathcal{B}_\xi \mapsto h_x(s) \in \mathbb{R}
	$$
	is smooth.  
\end{proposition}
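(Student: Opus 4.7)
The plan is to construct $h_x$ as the inverse of an explicit reparametrization $k_x$, in the flow-parameter variable $t$, that converts the $\xi$-integral curve through $x$ into an affinely parametrized geodesic with initial velocity $\xi_x$. Fix $x\in L$ and set $\alpha_x(t):=\Phi_\xi(x,t)$ on $(\fpf_\xi^-(x),\fpf_\xi^+(x))$. Seeking a smooth strictly increasing $k_x$ such that $\sigma\mapsto\alpha_x(k_x^{-1}(\sigma))$ is affinely parametrized as a geodesic with initial velocity $\xi_x$, a direct computation using \eqref{uso3} shows, with $u:=1/k_x'$, that this forces $u'+(\gs_\xi\circ\alpha_x)\,u=0$ and $u(0)=1$. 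This yields the explicit formula
\begin{equation*}
	k_x(t)=\int_0^t \exp\!\left(\int_0^\tau \gs_\xi(\Phi_\xi(x,\sigma))\,d\sigma\right)d\tau,
\end{equation*}
which is jointly smooth on $\mathcal{A}_\xi$, strictly increasing in $t$, with $k_x(0)=0$ and $k_x'(0)=1$.

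Next I set $h_x:=k_x^{-1}$ on $(a_x,b_x):=k_x\bigl((\fpf_\xi^-(x),\fpf_\xi^+(x))\bigr)$, giving a smooth increasing $h_x$ with $h_x(0)=0$ and $h_x'(0)=1$. By construction $\sigma\mapsto\Phi_\xi(x,h_x(\sigma))$ is an affine null geodesic with initial velocity $\xi_x$, so by uniqueness of geodesics it coincides with $\Psi_\xi(x,\cdot)$ on $(a_x,b_x)$; in particular $\Psi_\xi(x,s)=\Phi_\xi(x,h_x(s))$ there, and membership of the right-hand side in $L$ immediately yields $(a_x,b_x)\subseteq(\alf_\xi^-(x),\alf_\xi^+(x))$. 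Equation \eqref{eqh2} then follows by differentiating the identity $\partial_s\Psi_\xi(x,s)=h_x'(s)\,\xi_{\Psi_\xi(x,s)}$ along the geodesic and using $\nabla_{\partial_s\Psi_\xi}\partial_s\Psi_\xi=0$ together with \eqref{uso3}.

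The main obstacle is establishing the reverse inclusion $(\alf_\xi^-(x),\alf_\xi^+(x))\subseteq(a_x,b_x)$, i.e., the surjectivity of $h_x$. For the future endpoint, suppose toward a contradiction that $b_x<\alf_\xi^+(x)$. Then as $t\to\fpf_\xi^+(x)^-$, $k_x(t)\to b_x$ and hence $\alpha_x(t)=\Psi_\xi(x,k_x(t))\to p:=\Psi_\xi(x,b_x)\in L$. If $\fpf_\xi^+(x)<\infty$, smoothness and non-vanishing of $\xi$ at $p\in L$ allow one to extend the integral curve $\alpha_x$ past $\fpf_\xi^+(x)$, contradicting maximality of $\mathcal{A}_\xi$. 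If $\fpf_\xi^+(x)=\infty$, combining $\Phi_\xi(x,t)\to p$ with the flow semigroup identity $\Phi_\xi(x,t+T)=\Phi_\xi(\Phi_\xi(x,t),T)$ and continuous dependence of $\Phi_\xi$ gives $\Phi_\xi(p,T)=p$ for every sufficiently small $T$, whence $\xi_p=0$, contradicting that $\xi$ is a null section. The past endpoint is handled by a symmetric argument, and the analogous conclusions with $-\xi$ yield the claim for $a_x$.

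Finally, joint smoothness of $h:(x,s)\in\mathcal{B}_\xi\mapsto h_x(s)\in\mathbb{R}$ follows by applying the inverse function theorem to the smooth map $(x,t)\in\mathcal{A}_\xi\mapsto(x,k_x(t))$, whose Jacobian is nonsingular since $\partial_t k_x>0$ everywhere. Uniqueness of $h_x$ with the prescribed initial data $h_x(0)=0$, $h_x'(0)=1$ is then immediate from standard ODE uniqueness for \eqref{eqh2}.
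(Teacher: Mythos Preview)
Your proof is correct and follows essentially the same route as the paper: both define the reparametrization $k_x(t)=\int_0^t\exp\bigl(\int_0^\tau\gs_\xi(\Phi_\xi(x,\sigma))\,d\sigma\bigr)\,d\tau$ (called $\ell$ in the paper), set $h_x:=k_x^{-1}$, and identify its domain with $(\alf_\xi^-(x),\alf_\xi^+(x))$ via the maximality of the $\xi$-integral curve in $L$---where the paper appeals tersely to that maximality, you spell out the limit-point contradiction in two cases. The only noteworthy difference is the joint-smoothness step: the paper derives the explicit integral formula $h_x(s)=-\int_0^s g\bigl(\zeta_{\Psi_\xi(x,u)},\partial_u\Psi_\xi(x,u)\bigr)\,du$ using an auxiliary timelike field $\zeta$ with $g(\zeta,\xi)=-1$, while you apply the inverse function theorem to $(x,t)\mapsto(x,k_x(t))$; both arguments are valid.
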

\begin{proof}
	We simplify the notation by writing 
	$$I:= \left(\alf^-_\xi(x),\alf^+_\xi(x)\right)\quad\hbox{and}\quad J:= \left(\fpf^-_\xi(x), \fpf^-_\xi(x)\right).$$
	Let $\alpha: s\in I\mapsto \Psi_\xi(x,s) \in L$ the  null geodesic generator with $\alpha'(0) = \xi_x$, and let $\beta: s \in J \mapsto \Phi_\xi(x,s) \in L$ be the  integral curve on $L$ of the null section $\xi$ with $\beta(0) = x$. In particular we have 
	\begin{equation}\label{intaspregeod}
		\beta'' = \left(\kappa_\xi \circ \beta\right)\beta'.
	\end{equation}
	Define 
	$$
	g: s \in J \longmapsto \int_0^s \gs_\xi(\Phi_\xi(x,u))\, du = \int_0^s \left(\gs_\xi\circ \beta\right)(u)\, du\in \mathbb{R}, 
	$$
	and 
	$$
	\ell: s \in J \mapsto  \int_0^se^{g(t)}\, dt \in \mathbb{R}. 
	$$
	Then, $\ell(0) = 0$, $\ell'(0)=1$ and $\ell' = \exp \circ g >0$. By the inverse function theorem $\ell$ is a global increasing diffeomorphism onto its (open) image $K:= \ell(J)\subset \mathbb{R}$. A simple calculation shows that the inverse function
	$$\rho:= \ell^{-1}: K \rightarrow J$$
	is a solution to the IVP 
	$$\rho'' + \left(\kappa_\xi\circ \beta \circ \rho\right)(\rho')^2 =0,\qquad \rho(0)=0, \qquad \rho'(0)=1.$$
	But in that case (cf. \cite[Exercise 3.19]{oneill}) and in face of equation \eqref{intaspregeod} the curve 
	$\gamma: K \rightarrow L$ given by $\gamma:= \beta \circ \rho$ is a null geodesic with $\gamma'(0) = \xi_x$. So, uniqueness of geodesics with the maximal character of $\beta$ on $L$ yields $K\equiv I$ and
	$$\gamma \equiv \alpha. $$
	If we now put $h_x\equiv \rho$ all the claims but the last one follow. To prove the latter, differentiate the equation 
	$$\Psi_\xi(x,s)=\Phi_\xi(x,h_x(s))$$
	with respect to $s$ to get
	\begin{equation}\label{casi}\frac{\partial \Psi_\xi}{\partial s}(x,s) = h_x '(s)\frac{\partial \Phi_\xi}{\partial s}(x,h_x(s)) \equiv h'_x(s)\xi_{\Psi_\xi(x,s)}.\end{equation}
	Now take $\zeta$ a future-directed timelike vector field defined over $L$ such that $g(\xi,\zeta)=-1$. Taking its pointwise scalar product through $g$ with equation \eqref{casi} and integrating we conclude that 
	\begin{equation}\label{eqh}
		h_x(s)=-\int_0^s g(\zeta_{\Psi_\xi(x,u)},\partial \Psi_\xi(x,u)/\partial u) du,
	\end{equation}
	from which the smoothness of $h$ follows. 
\end{proof}

\begin{proposition} \label{lemaed} If $\xi\in\mathfrak{X}(L)$ is a future-directed null section on the null hypersurface $L$, then the following statements hold. 
	\begin{enumerate} 
		\item
		
		Given $x\in L$  and $s\in (\alf^-_\xi(x),\alf^+_\xi(x))$,  if the function $\alf^+_\xi:L\rightarrow (0, \infty]$ is (finite and) smooth on a neighbourhood $\mathcal{V}\ni \Psi_\xi(x,s)$ (in $L$), then it is (finite and) smooth in some neighbourhood $\mathcal{U}\ni x$ (in $L$). 
		
		\item\label{p1prop} If $\alf^+_\xi$ is (finite and) smooth on an open set $\mathcal{U}\subset L$, then 
		\begin{equation}\label{TheEq}
			\xi( \alf^+_\xi)+\gs_\xi\cdot \alf^+_\xi+1=0\quad\text{ on }\quad  \mathcal{U}.
		\end{equation}
		In particular, if we take $\eta=(\alf^+_\xi\cdot \xi)|_{\mathcal{U}}$ then $\gs_\eta(x)=-1$ and $\fpf_\eta^+(x)=\infty$ for all $x\in \mathcal{U}$.

		\item \label{p3prop} If the surface gravity $\gs_\xi$ is constant along the null geodesic generator through a point $x\in L$, then 
		$$
		\alf^+_\xi(x)=
		\begin{cases}
			\frac{e^{\fpf^+_\xi(x)\gs_\xi(x)}-1}{\gs_\xi(x)} & \text{if $\fpf^+_\xi(x)<\infty$ and $\gs_\xi(x)\neq 0$} \\
			-\frac{1}{\gs_\xi(x)} & \text{if  $\fpf^+_\xi(x)=\infty$ and $\gs_\xi(x)<0$} \\
			\infty & \text{if $\fpf^+_\xi(x)=\infty$ and $\gs_\xi(x)\geq 0$.} 
		\end{cases}
		$$
		In particular, if $\xi$ is future complete and $\gs_\xi$ is a negative constant along each null geodesic generator, then $\alf^+$ is  
		(finite and) smooth.

		\item\label{pgprop} If $\alf^-_\xi,\alf^+_\xi:L\rightarrow\mathbb{R}$ are smooth then the null section $\eta=\left(\alf^+_\xi-\alf^-_\xi\right)\cdot \xi$ is geodesic.
	\end{enumerate}
\end{proposition}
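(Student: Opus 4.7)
The plan is to derive all four conclusions from a single flow-transformation identity
\begin{equation}\label{keyid_plan}
\alf^\pm_\xi(\Psi_\xi(x,s)) \;=\; \bigl(\alf^\pm_\xi(x) - s\bigr)\,h_x'(s),\qquad (x,s)\in\mathcal{B}_\xi,
\end{equation}
which I establish first. Setting $y:=\Psi_\xi(x,s)$, Proposition \ref{smallhfunction} gives $\partial_s\Psi_\xi(x,s)=h_x'(s)\,\xi_y$; hence by uniqueness of geodesics the $\xi_y$-initial null geodesic from $y$ is the affine reparametrization $t\mapsto\Psi_\xi(x,s+t/h_x'(s))$ of the original generator, and comparing the two maximal parameter intervals inside $L$ via \eqref{ca2} yields \eqref{keyid_plan}.

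Part (1) follows by choosing an open neighborhood $\mathcal{U}\ni x$ small enough that $(x',s)\in\mathcal{B}_\xi$ (open by Lemma \ref{lemalsc}) and $\Psi_\xi(x',s)\in\mathcal{V}$ for every $x'\in\mathcal{U}$; since $h_{x'}'(s)>0$, identity \eqref{keyid_plan} rearranges to
\[
\alf^+_\xi(x')\;=\;s+\frac{\alf^+_\xi(\Psi_\xi(x',s))}{h_{x'}'(s)},
\]
whose right-hand side is smooth (and finite) in $x'$ by smoothness of $\Psi_\xi$ on $\mathcal{B}_\xi$, of $\alf^+_\xi$ on $\mathcal{V}$, and of $h$ on $\mathcal{B}_\xi$ (Proposition \ref{smallhfunction}).

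For Part (2), I differentiate \eqref{keyid_plan} with respect to $s$ at $s=0$: the left-hand side, using $\partial_s\Psi_\xi(x,0)=\xi_x$, produces $\xi(\alf^+_\xi)|_x$, while the right-hand side, using $h_x'(0)=1$ and $h_x''(0)=-\gs_\xi(x)$ (the latter by evaluating \eqref{eqh2} at $s=0$), produces $-1-\gs_\xi(x)\alf^+_\xi(x)$, which is exactly \eqref{TheEq}. Plugging into \eqref{eqcambiogs} with $f=\alf^+_\xi$ yields $\gs_\eta=\alf^+_\xi\gs_\xi+\xi(\alf^+_\xi)=-1$. To show $\fpf^+_\eta(x)=\infty$, Lemma \ref{cambiofunciones} together with the substitution $t=h_x(u)$ and identity \eqref{keyid_plan} telescopes the integral into
\[
\fpf^+_\eta(x)\;=\;\int_0^{\fpf^+_\xi(x)}\frac{dt}{\alf^+_\xi(\Phi_\xi(x,t))}\;=\;\int_0^{\alf^+_\xi(x)}\frac{du}{\alf^+_\xi(x)-u}\;=\;\infty.
\]

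Part (3) reduces to integrating the ODE \eqref{eqh2} when $\gs_\xi\equiv k$ is constant along the generator: the solution with initial conditions $h_x(0)=0$, $h_x'(0)=1$ is $h_x'(s)=1/(1+ks)$, hence $h_x(s)=k^{-1}\log(1+ks)$ if $k\neq 0$ and $h_x(s)=s$ if $k=0$, on the maximal domain $\{1+ks>0\}$. Since $h_x$ is a diffeomorphism of $(\alf^-_\xi(x),\alf^+_\xi(x))$ onto $(\fpf^-_\xi(x),\fpf^+_\xi(x))$, inverting the explicit formulas case-by-case on the sign of $k$ and the finiteness of $\fpf^+_\xi(x)$ gives the three claimed cases; the ``in particular'' clause is immediate because $\alf^+_\xi=-1/\gs_\xi$ is then a smooth function on $L$. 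Finally, Part (4) follows because \eqref{keyid_plan} with the minus sign yields, by the same derivative computation as in Part (2), the companion identity $\xi(\alf^-_\xi)=-1-\gs_\xi\alf^-_\xi$; subtracting gives $\xi(\alf^+_\xi-\alf^-_\xi)=-\gs_\xi(\alf^+_\xi-\alf^-_\xi)$, which combined with \eqref{eqcambiogs} (applied to $f=\alf^+_\xi-\alf^-_\xi$) produces $\gs_\eta=0$. The principal obstacle is the careful setup of \eqref{keyid_plan} with the correct reparametrization factor $h_x'(s)$: tracking how affine parameters transform between geodesics with differently scaled initial velocities is the one step requiring real care, after which everything reduces to ODE manipulation and substitution.
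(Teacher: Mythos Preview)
Your proof is correct and follows essentially the same route as the paper's own argument: the paper derives the same reparametrization identity (their equation \eqref{eqa1}) via an auxiliary constant $A(x,s)$ and the normalizing timelike field $\zeta$, which amounts exactly to your factor $h_x'(s)$, and then proceeds through Parts (1)--(4) in the same way. Your derivation of the key identity directly from Proposition~\ref{smallhfunction} (identifying $\partial_s\Psi_\xi(x,s)=h_x'(s)\xi_{\Psi_\xi(x,s)}$ and comparing maximal affine domains) is in fact a mild streamlining of the paper's computation.
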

\begin{proof} $(1)$ Fix $x\in L$ and $s\in (\alf^-_\xi(x),\alf^+_\xi(x))$. Consider the null geodesic generators 
	$$
	\alpha: s\in (\alf^-_\xi(x),\alf^+_\xi(x)) \rightarrow L \quad\mbox{and} \quad \beta: (\alf^-_\xi(\Psi_\xi(x,s)),\alf^+_\xi(\Psi_\xi(x,s))) \rightarrow L
	$$ 
	parametrized so that $\alpha'(0) = \xi_x$ and $\beta' (0) = \xi_{\Psi_\xi(x,s)}$. Then 
	$\alpha(s)=\Psi_\xi(x,s) = \beta(0)$, and we conclude that $\beta$ is a reparametrization of $\alpha$,
	$$
	\beta(t) = \alpha(A\, t + s),
	$$
	where $A = A(x,s)\neq 0$ is some constant depending on our choices of $x$ and $s$. The previous equation can also be written as
	\begin{align}\label{noseescapa}
		\Psi_\xi\left(x, A(x,s)t+s\right)=\Psi_\xi(\Psi_\xi(x,s),t). 
	\end{align}
	In particular, the identity (\ref{noseescapa}) provides two expressions for the same geodesic with parameter $t$. Looking at the right-hand side of (\ref{noseescapa}) we deduce that the upper limit of the interval of definition of that geodesic is $\alf^+_\xi(\Psi_\xi(x,s))$. So, taking the limit as $t$ goes to $\alf^+_\xi(\Psi_\xi(x,s))$ in the second argument of the left-hand side of (\ref{noseescapa}), we deduce the following relation
	\begin{align}\label{eqaantes}
		A(x,s)\alf^+_\xi(\Psi_\xi(x,s)) + s=\alf^+_\xi(x). 
	\end{align}
	In order to understand a little more closely the dependence of $A$ on $(x,s)$, pick a timelike vector field  $\zeta$ defined over $L$ with $g(\zeta,\xi)=-1$. Differentiating equation \eqref{noseescapa} with respect to $t$, setting $t=0$ and recalling $\frac{\partial \Psi_\xi }{\partial s}(z,0)=\xi_z$ we conclude that 
	$$A(x,s) = -\frac{ 1   }{g(\zeta_{\Psi_\xi(x,s)},\frac{\partial \Psi_\xi}{\partial s}(x,s))},$$
	and hence
	\begin{align}\label{eqa1}
		\alf^+_\xi(\Psi_\xi(x,s))=-g\left(\zeta_{\Psi_\xi(x,s)},\frac{\partial \Psi_\xi}{\partial s}\right)\left(\alf^+_\xi(x)-s\right)
	\end{align}
	Therefore, insofar as $x$ has been arbitrarily chosen, we deduce from \eqref{eqa1} that if $\alf^+_\xi$ is smooth on a neighbourhood of $\Psi_\xi(x,s)$, then it is smooth on some neighbourhood of $x$ as claimed.
	
	\medskip
	
	$(2)$ Assume now that $\alf^+_\xi$ is smooth on some neighborhood $\mathcal{U}\subset L$ and pick again $x \in \mathcal{U}$. If we differentiate equation \eqref{eqa1} with respect to $s$ at $s=0$ and take into account that $\frac{\partial \Psi_\xi}{\partial s}(x,0)=\xi_x$ and $\gs_\xi=g(\nabla_\xi\zeta,\xi)$, then we get 
	\begin{align*}
		\xi(\alf^+_\xi))=-\gs_\xi \alf^+_\xi(x)-1.    
	\end{align*}
	Using this and equation \eqref{eqcambiogs} we see that $\eta=\alf^+_\xi\xi$ has constant surface gravity $\gs_\eta=-1$ on $\mathcal{U}$. Thus, on the one hand, Lemma \ref{cambiofunciones} yields  
	\begin{align*}
		\fpf_\eta^+(x)=\int_0^{\fpf_\xi^+(x)}\frac{1}{\alf^+_\xi(\Phi_\xi(x,t))}dt=\int_0^{\alf_\xi^+(x)}\frac{1}{\alf^+_\xi(\Psi_\xi(x,s))}h'_x(s)ds,
	\end{align*}
	and on the other hand, using equations \eqref{eqh} and \eqref{eqa1}, we obtain
	\begin{align*}
		\fpf_\eta^+(x)=\int_0^{\alf^+_\xi(x)} \frac{1}{\alf^+_\xi(x)-s}ds=\infty.
	\end{align*}
	
	\medskip 
	$(3)$ The  case $\gs_\xi(x)=0$ is trivial, so suppose $\gs_\xi(x)\neq0$. From equation \eqref{eqh2} we get $h_x(s)=\frac{1}{\gs_\xi(x)}\ln \left( \gs_\xi(x)s+1\right)$. Then, the conclusion follows by observing that 
	$$h_x(s) \rightarrow \fpf_\xi^+(x)\;\; \Longleftrightarrow\;\; s \rightarrow \alf^+_\xi(x).$$
	\medskip
	
	$(4)$
	Since $\alf^-_\xi=-\alf^-_{-\xi}$, from equations \eqref{eqcambiogs}, \eqref{TheEq} and Lemma \ref{lemalsc} one deduces that $\alf^-_\xi$ satisfies 
	$$
	\xi(\alf^-_\xi)+\gs_\xi\alf^-_\xi+1=0.
	$$
	Therefore, $\xi( \alf^+_\xi-\alf^-_\xi)+\gs_\xi(\alf^+_\xi-\alf^-_\xi)=0$, which means that $\eta=(\alf^+_\xi-\alf^-_\xi)\xi$ is geodesic, as desired.
\end{proof}

\begin{proposition}\label{propunico}
	Let $\xi_1,\xi_2\in\mathfrak{X}(L)$ be two null sections in a null hypersurface $L$ with identical nonzero constant surface gravity. If $\xi_1$ and $\xi_2$ are complete then $\xi_1=\xi_2$. 
	
	In particular, if $L$ is compact and admits a null section with nonzero constant surface gravity, then this section is unique up to homothety.   
	
\end{proposition}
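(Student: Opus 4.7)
The plan is to write $\xi_2 = f\,\xi_1$ for a unique nowhere-zero $f \in C^\infty(L)$ (since both sections are tangent to the same one-dimensional null line field on $L$) and show that the hypotheses force $f \equiv 1$. Denote the common surface gravity by $\kappa$. By connectedness of $L$, either $f > 0$ everywhere or $f < 0$ everywhere; I will treat these cases separately and rule out the second.

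Assume first $f > 0$. Substituting $\xi_2 = f\xi_1$ in \eqref{eqcambiogs} together with $\gs_{\xi_2} = \gs_{\xi_1} = \kappa$ yields the linear transport equation $\xi_1(f-1) = -\kappa(f-1)$. Completeness of $\xi_1$ allows this ODE to be integrated along the full integral curve through any $x \in L$, giving
\begin{equation*}
f(\Phi_{\xi_1}(x,t)) = 1 + (f(x)-1)\,e^{-\kappa t}, \qquad t \in \mathbb{R}.
\end{equation*}
Now apply Lemma \ref{cambiofunciones} (and its obvious past analogue, obtained by reversing $\xi_1$): completeness of $\xi_2$ is equivalent to the divergence of both integrals
\begin{equation*}
\int_{-\infty}^{0} \frac{dt}{1 + (f(x)-1)e^{-\kappa t}} \quad\text{and}\quad \int_{0}^{+\infty} \frac{dt}{1 + (f(x)-1)e^{-\kappa t}}.
\end{equation*}
A short case analysis on the signs of $\kappa$ and $f(x)-1$ now forces $f(x) = 1$: if $f(x) > 1$, the integrand decays exponentially at $-\infty$ or $+\infty$ according as $\kappa > 0$ or $\kappa < 0$, making one of the integrals converge; if $0 < f(x) < 1$, the denominator vanishes at some finite value of $t$, contradicting the assumption $f > 0$ at that generator point. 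Hence $f \equiv 1$, i.e., $\xi_1 = \xi_2$.

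The case $f < 0$ is ruled out using only the explicit formula above (which needs just $\xi_1$-completeness): one has $f(\Phi_{\xi_1}(x,t)) \to 1 > 0$ as $t \to +\infty$ if $\kappa > 0$, and as $t \to -\infty$ if $\kappa < 0$, contradicting $f < 0$. For the corollary, compactness makes every smooth vector field on $L$ complete, so given any two null sections $\xi,\xi'$ on $L$ with constant nonzero surface gravities $\kappa,\kappa'$, the rescaled section $(\kappa/\kappa')\,\xi'$ is a complete null section with the same constant surface gravity $\kappa$ as $\xi$; the first part then forces $(\kappa/\kappa')\,\xi' = \xi$, so $\xi'$ is a homothety of $\xi$.

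The only real obstacle is the sign bookkeeping in the case analysis of the second paragraph, which splits according to the signs of $\kappa$ and $f(x)-1$. The essential trick is translating completeness of the second section $\xi_2$ into an integral condition on $f$ via Lemma \ref{cambiofunciones}, after which the explicit exponential solution of the transport equation does all the work.
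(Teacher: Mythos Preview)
Your argument is correct. The route differs from the paper's: there, after normalizing to $\gs_{\xi_1}=-1$ and $\gs_{\xi_2}=\pm 1$, the authors invoke Proposition~\ref{lemaed}\,(\ref{p3prop}) to compute the \emph{affine length} functions $\alf^+_{\xi_1}=1$ versus $\alf^+_{\xi_2}\in\{1,\infty\}$, and then appeal to Lemma~\ref{cambiofunciones} (the relation $\alf^+_{f\xi}=\alf^+_\xi/f$) to force $f\equiv 1$. You instead solve the transport equation for $f$ explicitly along $\xi_1$-integral curves and translate completeness of $\xi_2$ into divergence of the \emph{final-parameter} integrals via the second formula in Lemma~\ref{cambiofunciones}. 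Both arguments ultimately rest on Lemma~\ref{cambiofunciones}; yours is slightly more hands-on and self-contained (it bypasses Proposition~\ref{lemaed} entirely), while the paper's is terser because the bookkeeping has already been absorbed into the affine-length machinery. Your direct handling of the $f<0$ case via the limit $f\to 1$ is also a clean alternative to the paper's reduction to future-directed sections with $\gs_{\xi_2}=\pm 1$.
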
 
\begin{proof}
	We can assume that both $\xi_1$, $\xi_2$ are future-directed,
	$\gs_{\xi_1}= -1$ and $\gs_{\xi_2}=\pm 1$.
	If we suppose that $\gs_{\xi_2}=1$, since  $b_{\xi_1}^+(x)=b_{\xi_2}^+(x)=\infty$, from Proposition \ref{lemaed} (\ref{p3prop}) we get 
	$\alf^+_{\xi_1}=1$ and $\alf^+_{\xi_2}=\infty$, in contradiction with Lemma \ref{cambiofunciones}.
	Therefore, $\gs_{\xi_2}=-1$ and again by  Proposition \ref{lemaed} (\ref{p3prop}) and Lemma \ref{cambiofunciones} we conclude that  $\xi_1=\xi_2$.	
\end{proof}

The following simple consequence of the Raychaudhuri equation will be relevant later. 
\begin{proposition}\label{lemaNCCfinita} Let $(M,g)$ be a spacetime which obeys the null convergence condition, $L\subset M$ a null hypersurface and  $\xi\in\mathfrak{X}(L)$  a future-directed null section. 
	If $H_\xi(x)>0$ for some $x\in L$ then  $\alf^+_\xi(x)\leq\frac{n-2}{H_\xi(x)}(<\infty)$.
\end{proposition}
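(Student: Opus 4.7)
The plan is to apply the Raychaudhuri inequality \eqref{ineqRay} along the affinely parametrized null geodesic generator $\gamma(s) := \Psi_\xi(x_0, s)$, $s \in [0, \alf^+_\xi(x_0))$, after passing to a geodesic rescaling of $\xi$ along that generator. Concretely, I would extend the tangent vector field $\gamma'$ to a smooth future-directed null section $\hat{\xi}$ defined on an open neighborhood of $\gamma$ in $L$; since $\gamma'(0) = \xi_{x_0}$, we can arrange $\hat{\xi}_{x_0} = \xi_{x_0}$, i.e., $\hat{\xi} = f\xi$ with $f(x_0) = 1$, so $H_{\hat{\xi}}(x_0) = H_\xi(x_0) > 0$ by \eqref{eqcambiocm}. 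Because $\gamma$ is an affinely parametrized geodesic with $\gamma'(s) = \hat{\xi}_{\gamma(s)}$, we get $\nabla_{\hat{\xi}}\hat{\xi} = \nabla_{\gamma'}\gamma' = 0$ along $\gamma$, i.e., $\gs_{\hat{\xi}} \equiv 0$ along $\gamma$.

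Setting $\mu(s) := H_{\hat{\xi}}(\gamma(s))$, the Raychaudhuri inequality \eqref{ineqRay} (here is where the null convergence condition enters) specializes along $\gamma$ to the Riccati-type inequality
\begin{equation*}
\mu'(s) \;\geq\; \frac{\mu(s)^2}{n-2}, \qquad \mu(0) = H_\xi(x_0) > 0.
\end{equation*}
Since the right-hand side is nonnegative, $\mu$ is monotone nondecreasing, hence strictly positive on the whole interval of smoothness $[0, \alf^+_\xi(x_0))$. Setting $u := 1/\mu$, one deduces $u'(s) \leq -1/(n-2)$, so integrating from $0$ yields
\begin{equation*}
0 \;<\; u(s) \;\leq\; \frac{1}{H_\xi(x_0)} - \frac{s}{n-2}, \qquad \forall s \in [0, \alf^+_\xi(x_0)),
\end{equation*}
which forces $\alf^+_\xi(x_0) \leq (n-2)/H_\xi(x_0)$ as claimed.

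The only mildly delicate point is producing the extension $\hat{\xi}$ and verifying that $\gs_{\hat{\xi}}$ vanishes identically along $\gamma$; this is a routine construction (e.g., choose a smooth screen distribution in a tubular neighborhood of $\gamma$ and define $\hat{\xi}$ via a local null frame). No substantive obstacle is expected: everything else is a direct comparison argument for a scalar ODE inequality.
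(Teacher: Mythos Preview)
Your argument is essentially the paper's: restrict to the affinely parametrized generator, use the Raychaudhuri inequality \eqref{ineqRay} with vanishing surface gravity to get $\mu' \geq \mu^2/(n-2)$, and run the standard Riccati comparison via $u=1/\mu$. The analytical core is identical.

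The one point where the paper is more careful is the extension step. You propose a single smooth null section $\hat{\xi}$ on an open neighborhood of the \emph{entire} image of $\gamma$ with $\hat{\xi}_{\gamma(s)}=\gamma'(s)$. This can fail if the generator is not embedded in $L$ (for instance, if it self-intersects or accumulates on itself, which the hypotheses do not exclude): at a double point $\gamma(s_1)=\gamma(s_2)$ the vectors $\gamma'(s_1)$ and $\gamma'(s_2)$ need not coincide, so no single section can match both. The paper avoids this by invoking only \emph{local} geodesic null sections $\eta_I$ around each $\alpha(t_0)$, observing that on overlaps $\eta_I(\alpha(s))=\alpha'(s)=\eta_J(\alpha(s))$ forces $H_{\eta_I}(\alpha(s))=H_{\eta_J}(\alpha(s))$ by \eqref{eqcambiocm}, and then patching the scalar function $y(s)$ rather than the sections themselves. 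If you replace your global $\hat{\xi}$ by this local-section-plus-patching argument, your proof becomes complete and coincides with the paper's.
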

\begin{proof} Consider the null geodesic 
	$\alpha:[0,\alf^+(x))\rightarrow L$ given by
	$\alpha(s)=\exp(s\,\xi_x)=\Psi_\xi(x,s)$. Since the PDE \eqref{basicpde} always admits {\it local} solutions, and in particular for $c=0$, given $t_0\in (0,\alf^+(x))$, there exist an open interval $t_0\in I\subset\mathbb{R}$, an open set $\alpha(t_0)\in U\subset L$ and a geodesic null section $\eta_I\in\mathfrak{X}(U)$ such that $\eta_{I}(\alpha(s))=\alpha'(s)$ for all $s\in I$. Since $\alpha$ is a geodesic, equation \eqref{ineqRay} implies the function $y:I\rightarrow\mathbb{R}$ given by  $y(s)=H_{\eta_I}(\alpha(s))$ satisfies the inequality 
	\begin{align}
		0\leq y'(s)-\frac{1}{n-2}y(s)^2.\label{ineq1}
	\end{align}
	
	For any two overlapping intervals $I,J$ as before, since $\eta_I(\alpha(s))=\eta_J(\alpha(s))=\alpha'(s)$ for all $s\in I\cap J$, then from equation \eqref{eqcambiocm} we have $H_{\eta_I}(\alpha(s))=H_{\eta_J}(\alpha(s))$ for all $s\in I\cap J$.
	Therefore, we can construct a well-defined smooth function $y:[0,\alf^+(x))\rightarrow \mathbb{R}$ satisfying the inequality \eqref{ineq1}. Since $\alpha'(0)=\xi_x$, we also have $y(0)=H_\xi(x)$.
	
	Finally, since $y(0)>0$, the inequality \eqref{ineq1} implies that $y(s)>0$  and 
	$0<\frac{1}{y(s)}\leq \frac{1}{y(0)}-\frac{1}{n-2}s$ for all $s\in[0,\alf^+(x))$. Thus $\alf^+(x)\leq\frac{n-2}{y(0)}=\frac{n-2}{H_\xi(x)}$. 
\end{proof}

\section{Key geometric conditions on null hypersurfaces}\label{keykey}
We wish to give here the detailed definitions of the conditions used in, and briefly described after, Theorem \ref{teoprin}. 

\begin{definition}[Future prolonged generators]\label{def.1.1}
	A (future-directed) null geodesic generator $\gamma:(a,b) \rightarrow L$ ($-\infty \leq a<b\leq +\infty$) of the null hypersurface $L$ is said to be {\em future prolonged} if it is either future complete (i.e., $b=\infty$) or else  it is future extendible in some isometric extension $(\tilde{M},\tilde{g})$ of $(M,g)$ (which may depend on the generator) of $(M,g)$. In particular, there exists a future null geodesic $\tilde{\gamma}:(a,\tilde{b})\rightarrow \tilde{M}$ with $b<\tilde{b}\leq \infty$, $\tilde{\gamma}|_{(a,b)} = \gamma$ and $\tilde{\gamma}(b) \in \tilde{M}\setminus L$. 
\end{definition}
{\it Past prolonged} null geodesic generators are defined in a time-dual fashion. Henceforth, definitions/results will often be given only in ``future versions'' and the obviously obtained ``past versions'' will be then understood.

Evidently, the only way a null hypersurface will {\it not} have future prolonged null generators is when they are future inextendible but still future incomplete in {\it any} isometric extension of $(M,g)$. (This could occur, for example, due to a blow-up of curvature along one or all the null generators.)

\begin{definition}[Future weak inextendibility]\label{def.1.2}
	Given a future null geodesic generator $\gamma:(a,b) \rightarrow L$ ($-\infty \leq a<b\leq +\infty$) of the null hypersurface $L$, we say that $L$ is {\em future weakly extendible along $\gamma$} if there exists a null codimension one immersion $\phi:\Sigma ^{n-1} \rightarrow M$ such that $L\subset \phi(\Sigma)$ and $\gamma$ admits a future endpoint $p \in \phi(\Sigma)\setminus L$. We thus say that $L$ is {\em future weakly inextendible} if it is not future weakly extendible along any of its null geodesic generators. 
\end{definition}
Observe that future weak inextendibility of the null hypersurface $L$ may hold and yet $L$ be extendible. A simple but instructive example is given in the spacetime $(M,g) = (\mathbb{R}^3, -dt^2 + dx^2 +dy^2)$ (with time-orientation so that $\partial_t$ is future-directed) by the ``null cone without two generators''
$$L= \{(t,x,y) \in \mathbb{R}^3 \, : \, t^2=x^2+y^2, \, t<0, \, y\neq 0\}.$$
Thus, $L$ is clearly extendible, say via the embedded cone $L'\supset L$ given by 
$$L'= \{(t,x,y) \in \mathbb{R}^3 \, : \, t^2=x^2+y^2,\, t<0\}.$$
Both $L$ or $L'$ are future weakly inextendible in the above sense, but only $L'$ is truly inextendible. In addition, both $L,L'$ have future prolonged generators, but if one deletes the origin $(0,0,0)$ from the spacetime, then $L,L'$ would {\it not} have future prolonged generators, while remaining future weakly inextendible in the resulting spacetime.

We end this section with a key definition, which together with Definitions \ref{def.1.1} and \ref{def.1.2}, precisely identifies the class of null hypersurfaces we shall be interested in as per our main results.

\begin{definition}[Future finite type]\label{def.1.3} Let $L\subset M$ be a null hypersurface in the spacetime $(M,g)$ and  $\xi\in\mathfrak{X}(L)$ a future-directed null section thereon.
	Consider the map $$\Gamma_\xi:\mathcal{B}_\xi-\{(x,s) \in \mathcal{B}_\xi: H_\xi(\Psi_\xi(x,s))=0\}\rightarrow \mathbb{R},$$ given by
	$$
	\Gamma^+_\xi(x,s):=\frac{||A^*_\xi||^2_{\Psi_\xi(x,s)}}{H^2_\xi(\Psi_\xi(x,s))},
	$$
	where $H_\xi$ is the null mean curvature function and $A^*_\xi$ denotes the endomorphism defined around \eqref{automorpheq}. We say that 
	$L$ is of {\em future finite type} if:
	\begin{itemize} \item For each $x\in L$, we have that $H_\xi(\Psi_\xi(x,s))$ is positive (i.e., the expansion is negative) for some $s\in \left(\alf^-_\xi(x),\alf^+_\xi(x)\right)$.
		\item  
		$\Gamma^+_\xi$ can be extended to a smooth map $\tilde{\Gamma}^+_\xi$ defined in a open set $\mathcal{O}\subset L\times\mathbb{R}$ such that $(x,\alf^+_\xi(x))\in\mathcal{O}$ for all $x\in L$ with $\alf^+_\xi(x)<\infty$.
	\end{itemize}
\end{definition}

The above definition does not depend on the chosen null section insofar as it is future-directed. Indeed,  take $f\,\xi\in\mathfrak{X}(L)$  another future-directed null section, with $f\in C^\infty(L)$ a positive function, and note that $\Psi_{f\, \xi}(x,s)=\exp(sf(x)\, \xi_x)=\Psi_\xi(x,f(x)s)$. Using  equations \eqref{eqcambioB} and \eqref{eqcambiocm} we have that 
\[
\begin{array}{c}
	\Gamma^+_{f\xi}(x,s)=
	\frac{||A^*_{f\xi}||^2_{\Psi_{f\, \xi}(x,s)}}{H_{f\xi}(\Psi_{f\, \xi}(x,s))^2} \qquad\qquad\qquad\qquad\qquad\qquad\qquad\qquad\qquad\qquad\qquad\quad \\  \quad\qquad\qquad\qquad\qquad =\frac{||A^*_{f\xi}||^2_{\Psi_{f\, \xi}(x,s)} }{H_{f\xi}(\Psi_\xi(x,f(x)s))^2}=
	\frac{||A^*_\xi||^2_{\Psi(x,f(x)s)}}{H_\xi(\Psi_\xi(x,f(x)s))^2}
	=\Gamma^+_\xi(x,f(x)s).
\end{array}
\]
Hence, $\Gamma^+_{\xi}$ can be extended to a smooth map which contains in its domain the points $(x,\alf^+_{\xi}(x))$ for all $x\in L$ iff
$\Gamma^+_{f\xi}$ can be extended to a smooth map which contains in its domain the points  
$$
(x,\alf^+_{f\xi}(x))=\left(x,\alf^+_\xi(x)/f(x)\right)\quad\hbox{for all $x\in L$.}
$$
Observe that inequality \eqref{ineqh} implies $\Gamma^+_\xi(x,s)\geq \frac{1}{n-2}$ for all $(x,s)$ on its domain. On the other hand, if $L$ is totally umbilic with everywhere-negative expansion, then $\Gamma^+_\xi(x,s)$ is {\it constantly equal} to $\frac{1}{n-2}$, and consequently, {\it any totally umbilic null hypersurface  with everywhere-negative expansion is of finite type}.

The conditions in Definitions \ref{def.1.1}, \ref{def.1.2} and \ref{def.1.3} evoke the behavior of a null cone in Minkowski spacetime near its vertex. However, even in the Minkowski spacetime, there are other examples of null hypersurfaces that satisfy these three conditions, apart from null cones. Consider, for instance, the null hypersurface 
$$
L=\{(x,y,z,t)\in\mathbb{L}^4: t>0, t=\sqrt{x^2+y^2}\}.
$$
Pick as null section $\xi=-x\partial_x-y\partial_y-t\partial_t$, which we assume is future-directed. Pick a screen distribution given by $\mathcal{S}={\rm span}\{y\partial x-x\partial y, \partial z\}$. Since 
\begin{align*}
	\nabla_{\partial_z}\xi&=0,\\
	\nabla_{y\partial_x-x\partial_y}\xi&=y\partial_x-x\partial_y,
\end{align*}
the corresponding principal curvatures of $L$ with respect to $\xi$ are constantly equal to $0$ and $1$, resp. Therefore, $L$ is of future finite type. It is clear that $L$ is future weakly inextendible and has future prolonged null generators.

\section{Some noteworthy examples}\label{ex}

Our goal in this section is to provide a number of illustrative examples clarifying the scope of the results in this paper as well as of other similar ones in the literature.

We begin by emphasizing that {\it a null section with constant surface gravity may not exist}. In addition, whether such a putative constant is zero or nonzero can affect existence. 

As a simple example of the latter phenomenon, in which a globally defined null section with {\it nonzero} constant surface gravity does not exist, is a degenerate hyperplane in Minkowski spacetime. In fact, since $\alf^\pm_\xi=\pm\infty$  in this case, this directly follows from  Proposition \ref{lemaed}(\ref{p3prop})\footnote{More generally, and by the same token, no null section with nonzero surface gravity will exist on a null hypersurface with complete null geodesic generators.}. Of course, a {\it geodesic} null section, hence with identically zero constant gravity, can be defined in this case. 

The previous example is very admittedly special. But one can obtain many further examples of the same kind via the following result.  

\begin{proposition}\label{propnone} Let $L\subset M$ be a null hypersurface in the spacetime $(M,g)$ and $\xi\in\mathfrak{X}(L)$ a future-directed null section on $L$. Fix $x_0\in L$ and  take $\gamma:I\rightarrow L$ the integral curve of $\xi$ through $x_0$. Suppose that:
	\begin{itemize}
		\item 
		$\fpf^+_\xi(x_0)=\infty$ and $\fpf^-_\xi(x_0)=-\infty$.
		\item $\kappa_\xi(\gamma(t))\geq 0$ for $t\geq 0$ and $\kappa_\xi(\gamma(t))\leq 0$ for $t\leq 0$.
	\end{itemize}
	Then $L$ does not admit a null section with nonzero constant surface gravity. Moreover,
	the null geodesic generator of $L$ through $x_0$ is complete, i.e., $\alf^+_\xi(x_0)=\infty$ and $\alf^-_\xi(x_0)=-\infty$.
\end{proposition}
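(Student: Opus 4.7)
The plan is to argue the first assertion by contradiction: assume that $\eta\in\mathfrak{X}(L)$ has constant surface gravity $c\neq 0$, write $\eta=f\xi$ for $f\in C^{\infty}(L)$ nowhere zero, and derive a first-order linear ODE along $\gamma$ whose unique solution must change sign under the stated hypotheses, contradicting $f$ being nowhere zero. The completeness of the null generator through $x_0$ will then fall out of the same sign bookkeeping applied to the reparametrization ODE of Proposition~\ref{smallhfunction}.

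After possibly replacing $\eta$ by $-\eta$ (which sends $c\mapsto -c$ without affecting its nonvanishing), I may assume $f>0$ on the connected component of $L$ containing $x_0$. Setting $g(t):=f(\gamma(t))$, the transformation law \eqref{eqcambiogs} applied to the identity $\gs_\eta=c$ restricts along $\gamma$ to the linear ODE
\begin{equation*}
g'(t)+\kappa_\xi(\gamma(t))\,g(t)=c,\qquad t\in\mathbb{R},
\end{equation*}
the domain being all of $\mathbb{R}$ by virtue of $\fpf^{\pm}_\xi(x_0)=\pm\infty$. Letting $K(t):=\int_0^t\kappa_\xi(\gamma(\sigma))\,d\sigma$, the sign hypothesis on $\kappa_\xi\circ\gamma$ yields $K(t)\geq 0$, and hence $e^{K(t)}\geq 1$, for every $t\in\mathbb{R}$: on $[0,\infty)$ this is immediate, while on $(-\infty,0]$ the nonpositivity of the integrand combines with the reversed direction of integration to produce a nonnegative value. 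The integrating factor method then gives
\begin{equation*}
g(t)=e^{-K(t)}\Bigl[g(0)+c\int_0^t e^{K(\sigma)}\,d\sigma\Bigr].
\end{equation*}
If $c>0$, letting $t\to-\infty$ one has $c\int_0^t e^{K(\sigma)}\,d\sigma\leq -|c||t|\to-\infty$; the case $c<0$ is symmetric as $t\to+\infty$. In either case the bracket is forced to change sign, and the intermediate value theorem produces a zero of $g$, contradicting $f>0$.

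For the completeness claim, let $h:=h_{x_0}$ denote the diffeomorphism of Proposition~\ref{smallhfunction} mapping $(\alf^-_\xi(x_0),\alf^+_\xi(x_0))$ onto $(\fpf^-_\xi(x_0),\fpf^+_\xi(x_0))=(-\infty,+\infty)$, with $h(0)=0$, $h'(0)=1$ and $h''+\kappa_\xi(\gamma(h))(h')^2=0$. Rewriting the ODE as $(1/h')'=\kappa_\xi(\gamma(h))$ and integrating from $0$, the same sign bookkeeping yields $1/h'(s)\geq 1$, i.e.\ $0<h'(s)\leq 1$, throughout the domain; integrating once more gives $|h(s)|\leq|s|$. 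Since $h(s)\to\pm\infty$ as $s\to\alf^{\pm}_\xi(x_0)$, this forces $\alf^{\pm}_\xi(x_0)=\pm\infty$, as required.

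The main obstacle is really just the sign bookkeeping that ensures $K\geq 0$ on all of $\mathbb{R}$ (the corresponding estimate on $(-\infty,0]$ relies on a double sign reversal); once this is settled, everything reduces to routine manipulation of a linear first-order ODE and of the Riccati-type reparametrization equation from Proposition~\ref{smallhfunction}.
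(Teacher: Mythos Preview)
Your proof is correct and follows essentially the same approach as the paper's: both restrict the transformation law \eqref{eqcambiogs} to an ODE along $\gamma$ and exploit the sign hypothesis on $\kappa_\xi\circ\gamma$ to force a zero of $f\circ\gamma$, then use the reparametrization ODE \eqref{eqh2} to bound $h_{x_0}$ linearly and deduce completeness. The only cosmetic differences are that the paper normalizes to $c=1$ and argues via the direct differential inequality $y'\geq 1$ (splitting into the cases $y>0$ and $y<0$), whereas you normalize to $f>0$ and use the integrating factor; likewise the paper reads off $h''\leq 0$ directly where you rewrite the equation as $(1/h')'=\kappa_\xi(\gamma(h))$. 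These are equivalent manipulations of the same estimates.
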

\begin{proof}
	Assume the existence of a null section $\eta = f\, \xi\in\mathfrak{X}(L)$ with constant surface gravity  $c\neq0$. Without loss of generality, we can suppose that $c=1$. From equation \eqref{eqcambiogs},  the function $y(t)=f(\gamma(t))$ satisfies 
	$$
	y'(t)=1-\gs_\xi(\gamma(t)) y(t). 
	$$
	Now, $f$ is nowhere zero, and hence either strictly positive or strictly negative on the connected image of $\gamma$. If $y=f\circ \gamma <0$ then $y'(t)\geq 1$ for $t\geq 0$ and thus $t+y(0)\leq y(t)< 0$ for positive $t$, and we get a contradiction by evaluating at $t=-y(0)>0$.
	If $f\circ \gamma>0$, then we get a contradiction in an analogous way.
	
	Finally, using the same notation as in Proposition \ref{smallhfunction}, we have $\gamma(h_{x_0}(s)) = \Phi_\xi(x_0,h_{x_0}(s)) = \Psi_\xi(x_0,s)$, and $$h_{x_0}(s)\geq 0 \Longleftrightarrow 0\leq s < \alf^+_\xi(x_0).$$
	From equation \eqref{eqh2} in Proposition \ref{smallhfunction} we have $h''_{x_0}(s)\leq 0$ for all $0\leq s<\alf^+_\xi(x_0)$, and thus (recall that $h_{x_0}(0)=0$ and $h'_{x_0}(0)=1$) $h_{x_0}(s)\leq s$ for all $0\leq s<\alf^+_\xi(x_0)$, whence $\fpf^+_\xi(x_0) \leq \alf^+_\xi(x_0)$. Since $\fpf^+_\xi(x_0)=\infty$, we get $\alf^+_\xi(x_0)=\infty$. Analogously, we can show that  $\alf^-_\xi(x_0)=-\infty$. 
\end{proof}

Observe that the second condition in the previous proposition does not depend on the time orientation of the null section.

In order to illustrate the broad applicability of Proposition \ref{propnone}, we consider a\-no\-ther class of examples where no null section with nonzero constant surface gravity exists, but which do possess a geodesic null section.
\begin{example}\label{ex1} {\em Take $M=\mathbb{R}^4$ with the Lorentz metric given in global Cartesian coordinates $(s,x,y,z)$ by
		$$
		g=2s\phi(x,y,z) \left(dx^2-ds^2\right) -2dsdx+dy^2+dz^2,
		$$
		where $\phi:\mathbb{R}^3\rightarrow\mathbb{R}$ is an arbitrary smooth function, and the time orientation is chosen such that the timelike  vector field $\zeta=\partial_x+ \partial_s$ is future-directed. Here, $L=\{(0,x,y,z):x,y,z\in\mathbb{R}\}$ is a totally geodesic null hypersuface with future-directed null section $\xi=\partial_x|_L$. 
		We can compute its surface gravity as follows. Since $g(\zeta,\xi)=-1$, we have
		$$
		\gs_\xi=g(\xi,\nabla_{\xi}\zeta)=g(\partial_x,\nabla_{\partial_x}\partial_s)=\frac{1}{2} \partial_s\, g(\partial_x,\partial_x)=\phi.
		$$
		Consider the integral curve of $\xi$ given by $\gamma(t)=(0,t,0,0)$ for all $t\in\mathbb{R}$. Since $\gs_\xi(\gamma(t))=\phi(t,0,0)$, a suitable choice of $\phi$ (for instance $\phi(x,y,z) = x$) together with an application of Proposition \ref{propnone} allows us to conclude that a null section with nonzero constant surface gravity does not exist on $L$. 
		However, note that from equation \eqref{eqcambiogs} the null section $\eta=f\, \partial_x$, with $f(x,y,z)=e^{-\int_0^x \phi(r,y,z)dr}$, is a geodesic null section.}
\end{example}

The following example shows that the completeness hypothesis for the null sections cannot be dropped in Proposition \ref{propunico}. It also illustrates the fact that we can perfectly well have {\it both} a null section with nonzero constant surface gravity {\it and} a geodesic null section on the same (weakly inextendible) null hypersurface.

\begin{example}\label{ejemplocono}
	{\em Consider the null cone in the $n$-dimensional Minkowski spacetime $\mathbb{L}^n$ given by
		$$
		L=\{(x_0,\ldots,x_{n-1})\in \mathbb{L}^n: x_0^2=\sum_{k=1}^{n-1}x_k^2,x_0>0\}\subset \mathbb{L}^n.
		$$
		We of course know that $L$ is a null hypersurface and $\xi:=\sum_{k=0}^{n-1}x_k \partial_{x_k}$ is a complete null section with surface gravity $\gs_{\xi}=1$. Moreover, $\eta=f\xi$ with $f=\frac{1}{x_0}$ is geodesic. Therefore, Equation\eqref{eqcambiogs} implies that the function $f$ satisfies 
		$$
		0=f+\xi(f).
		$$
		We now consider the (noncomplete) null section $\xi':=\xi+\eta=\left(1+f\right)\xi$.  Since $$1=(1+f)+\xi(1+f),$$
		$\xi'$ also has constant surface gravity $\gs_{\xi'}=1$.

		The previous discussion points to a general behavior of null cones in {\it any} spacetime $(M,g)$. More precisely, given $p\in M$ and $e_0\in T_pM$ a future-directed unit timelike vector, let
		$$
		C^+_{p}:=\{exp_p(u):u\in \mathcal{D}_p, g(u,u)=0,g(u,e_0)=-1\},
		$$
		where $\mathcal{D}_p = \mathcal{D}\cap T_pM$ is the maximal domain of $\exp_p$, be the latter of the future null cone in $T_pM$. Near the vertex, $C^+_{p}$ is a null hypersurface: if we pick a normal neighborhood $\mathcal{U}$ of $p$, then $$S_p:= \{\exp_p(u):u\in\exp_p^{-1}(\mathcal{U}),g(u,u)=0,g(u,e_0)=-1\}$$
		is a null hypersurface contained in $C^+_{p}$. If we choose suitable normal coordinates $(x_0,\ldots,x_{n-1})$ around $p$ with $\partial_{x_0}|_{p}=e_0$, then $\xi=\sum_{i=0}^{n-1} x_i\partial_{x_i}|_S$ is a null section with constant surface gravity $1$, whereas $\xi'=\frac{1}{x_0}\xi$ is a geodesic null section of $S$.
		Therefore, as before, $\xi'=\xi+\eta$ is another null section with constant surface gravity $\gs_{\xi'}=1$.}
	
\end{example}

Next, we provide an example with no geodesic null sections, but admitting a null section with constant surface gravity equal to $-1$.

\begin{example}\label{ej1} {\em With the conventions and notation of Example \ref{ex1} we again take $\widetilde{M}=\mathbb{R}^4$ with metric 
		$$
		\widetilde{g}=2s\phi(y,z)\left(dx^2-ds^2\right)-2dsdx+dy^2+dz^2,
		$$
		where now the smooth function $\phi$ is assumed negative and satisfying 
		
		$$\phi(y+2\pi n,z+2\pi m)=\phi(y,z)
		$$
		for all $y,z\in\mathbb{R}$ and $n,m \in \mathbb{Z}$. As in Example \ref{ex1}, $\widetilde{L}=\{(0,x,y,z):x,y,z\in\mathbb{R}\}$ is a totally geodesic null hypersurface and $\xi=\partial_x$ is a null section with surface gravity $\gs_\xi=\phi$. Fixed $(0,x,y,z)\in L$, we can easily check that 
		$$
		\alpha(s)=\left(0,x+\mathfrak{h}(s),y,z\right),\quad \hbox{with $\;\mathfrak{h}(s)= \frac{1}{\phi(y,z)}\ln(1+\phi(y,z)s)$,}
		$$
		is a future-incomplete null geodesic generator in $\widetilde{L}$ with $\alpha(0)=(0,x,y,z)$ and $\alpha'(0)=\xi_{\alpha(0)}$. Observe that the features we have imposed on $\phi$ ensure the existence of an isometric, properly discontinuous action of $\mathbb{Z}^3$ preserving $L$ given by $(s,x,y,z) \mapsto (s,x+2\pi m,y+2\pi  n,z+2\pi k)$ $m,n,k \in \mathbb{Z}$. The quotient $M = \mathbb{R}^4/\mathbb{Z}^3 = \mathbb{R}\times \mathbb{S}^1\times \mathbb{S}^1\times \mathbb{S}^1$ inherits a unique metric $g$ and time orientation so that the canonical projection $\pi:\widetilde{M} \rightarrow M$ becomes a time-orientation-preserving Lorentzian covering map. Then $L=\pi(\widetilde{L})$ is a compact totally geodesic null hypersurface in $(M,g)$. The geodesic $\alpha$ projects down as null incomplete geodesic generator of $L$.
		
		Looking at the domain of definition of $\mathfrak{h}$ we infer that 
		$$
		\alf^+_{\xi}(0,x,y,z)=-\frac{1}{\phi(y,z)}\quad\hbox{on $L$},$$ where we understand that the appropriate identifications on the coordinates hold and that $\xi$ is now the well-defined projection of the null section $\xi$ to $L$, which we still denote in the same way. $\alf^+_\xi$ is a smooth function, and Proposition \ref{lemaed} guarantees that $-\frac{1}{\phi}\xi$ is a complete null section with constant surface gravity $-1$. Furthermore, Proposition \ref{propunico} entails that this is the unique null section on $L$ with this property. But no geodesic null section exists, since any such would have to be complete by the compactness of $L$, but would have the projection of the incomplete null geodesic $\alpha$ as an integral curve, which is a contradiction.}
\end{example}

The following final example shows that a null hypersurface may not admit sections with constant (either zero or nonzero) surface gravity.

\begin{example}\label{ej2}
	{\em If we take $\phi(y,z)=\cos y$ in the previous example, we obtain via the quotient taken therein a compact totally geodesic null hypersurface $L$ with affine length functions for $\xi=\partial_x$ given by
		\[
		\alf^+_{\xi}(0,x,y,z)=\begin{cases} \infty & \text{if $\cos y\geq 0$} \\ \frac{-1}{\cos y} & \text{if $\cos y<0$}\end{cases}.\qquad
		\alf^-_{\xi}(0,x,y,z)=\begin{cases} \frac{-1}{\cos y} & \text{if $\cos y>0$} \\
			\infty & \text{if $\cos y\leq 0$.} 
		\end{cases}
		\]
		Reasoning as in previous example, there are no geodesic null sections on $L$. Furthermore, no null section with nonzero constant surface gravity exists either. Indeed, suppose by contradiction that $\xi'$ is such a null section with, say, $\gs_{\xi'}=-1$. By Proposition \ref{lemaed} (\ref{p3prop}) we have $\alf^+_{\xi'}(0,x,y,z)=1$ for all $(0,x,y,z)\in L$,  and thus by Lemma \ref{cambiofunciones} it follows that $\alf^+_\xi(0,x,y,z)$ would be finite for all $(0,x,y,z)\in L$, which is contradiction.}
\end{example}

\section{Proof of the main theorem}\label{fin}

We now turn to the proof of Theorem \ref{teoprin}, which will be carried out via a number of subsidiary technical results.

The first of these has perhaps independent geometric interest, as it gives conditions to ensure an open strip exists around a large enough piece of a null geodesic generator. A result of this sort was proven 
in \cite[Proposition 22]{Kupeli} under the assumption of strong causality of $(M,g)$.

\begin{proposition}\label{propbanda1}
	Let $L$ be a null hypersurface in the spacetime $(M,g)$ and $\xi\in\mathfrak{X}(L)$ a future-directed null section. Fix  a point  $x_0\in L$ and assume that at least one of the following conditions holds:
	\begin{enumerate}
		\item $(M,g)$ is causal, satisfies the null convergence condition and $H_\xi(x_0)>0$;
		\item $(M,g)$ satisfies the null convergence condition, $H_\xi(x_0)>0$ and $L$ has future prolonged null geodesic generators.
	\end{enumerate}
	Then, there is some open subset    
	$x_0\in V\subset L$ and a geodesic null section $\eta\in\mathfrak{X}(V)$ with $\eta_{x_0}=\xi_{x_0}$ such that $\exp(s\cdot \eta_x)\in V$ for any $x\in V$ and any $0\leq s<\alf^+_\eta(x)$.
\end{proposition}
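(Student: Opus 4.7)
The plan is to produce $\eta$ as a suitable geodesic rescaling of $\xi$, first in a small neighborhood of $x_0$ and then on a tubular enlargement $V$ along the full future null generator of $L$ from $x_0$. To start, I would solve the linear ODE $\xi(f)+\gs_\xi f=0$ along each $\xi$-integral curve, prescribing the value $1$ on a small $(n-2)$-dimensional transverse slice through $x_0$; this produces a smooth positive function $f_0$ on some open $W_0\ni x_0$ in $L$, with $f_0(x_0)=1$, such that by \eqref{eqcambiogs} the rescaled field $\eta^0:=f_0\,\xi$ is a geodesic null section on $W_0$ with $\eta^0_{x_0}=\xi_{x_0}$. Because $H_{\eta^0}(x_0)=f_0(x_0)\,H_\xi(x_0)>0$ by \eqref{eqcambiocm}, Proposition \ref{lemaNCCfinita} applied to $\eta^0$ gives $T_0:=\alf^+_{\eta^0}(x_0)\le (n-2)/H_\xi(x_0)<\infty$, so that the generator $\gamma(s):=\Psi_{\eta^0}(x_0,s)$, $s\in[0,T_0)$, is future-incomplete in $L$.

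Next I would use the hypotheses to smoothly extend the picture slightly past $s=T_0$. Under hypothesis (2), the future-prolongedness of $\gamma$ combined with $T_0<\infty$ supplies (via Definition \ref{def.1.1}) an isometric extension $(\tilde M,\tilde g)\supset(M,g)$ in which $\gamma$ extends to a null geodesic $\tilde\gamma\colon[0,\tilde T)\to\tilde M$ with $\tilde T>T_0$ and $\tilde\gamma(T_0)\notin L$. After possibly shrinking $W_0$, the \emph{Gauss map}
\[
\tilde\Psi\colon W_0\times(-\delta,T_0+\varepsilon)\longrightarrow \tilde M,\qquad \tilde\Psi(y,s):=\tilde\exp(s\,\eta^0_y),
\]
is then well-defined and smooth on a neighborhood of the compact arc $\{x_0\}\times[0,T_0]$. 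Under hypothesis (1) no isometric extension is needed: the causality of $(M,g)$ is instead exploited to prevent nearby null generators through a transverse slice $\Sigma$ from returning to $\Sigma$, which is the geometric content required to run the analogous argument inside $M$ itself.

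Fixing an $(n-2)$-dimensional spacelike slice $\Sigma\subset W_0$ through $x_0$ transverse to $\eta^0$, I then set
\[
V:=\bigl\{\Psi_{\eta^0}(y,s):y\in\Sigma,\ 0\le s<\alf^+_{\eta^0}(y)\bigr\},\qquad \eta_{\Psi_{\eta^0}(y,s)}:=\tfrac{\partial\Psi_{\eta^0}}{\partial s}(y,s).
\]
By construction $x_0\in V$, $\eta$ agrees with $\eta^0$ along the generators and is therefore geodesic with $\eta_{x_0}=\xi_{x_0}$, and the future-saturation condition $\exp(s\,\eta_x)\in V$ for all $x\in V$ and $0\le s<\alf^+_\eta(x)$ is immediate from the definition of $V$: for $x=\Psi_{\eta^0}(y,s_0)$ the forward generator is precisely the sub-arc $s\mapsto\Psi_{\eta^0}(y,s_0+s)$, still inside $V$ for $s_0+s<\alf^+_{\eta^0}(y)=\alf^+_\eta(x)+s_0$.

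The main obstacle will be to show that $V$ is open in $L$ and that $\eta$ is genuinely single-valued, both of which boil down to proving that $\Psi_{\eta^0}$ restricted to $\{(y,s)\in\Sigma\times\mathbb{R}:0\le s<\alf^+_{\eta^0}(y)\}$ is, after possibly shrinking $\Sigma$, an embedding onto its image. Under hypothesis (2) this is achieved by applying the inverse function theorem to $\tilde\Psi$ at $(x_0,T_0)$, which yields a neighborhood of $\tilde\gamma(T_0)$ in $\tilde M$ diffeomorphic to a piece of $\Sigma\times\mathbb{R}$; intersecting with $L$ provides openness of $V$ near the endpoint, and a compactness/gluing argument along the compact arc $\gamma([0,T_0])$, together with the openness of $\mathcal B_{\eta^0}$ furnished by Lemma \ref{lemalsc}, extends this local control over the whole tube. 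Under hypothesis (1), causality is what rules out accumulation of generators back onto $\Sigma$, and the same gluing procedure then produces the embedding inside $M$ and hence the openness of $V$.
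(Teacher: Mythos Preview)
Your overall architecture — pick a transverse $(n-2)$-slice $\Sigma$, let $V$ be the forward exponential tube from $\Sigma$, and take $\eta$ as the geodesic velocity field — matches the paper's. The difficulty, as you correctly identify, is to show that the tube map is an open embedding. Your proposed mechanism for this step, however, has a genuine gap.

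First, the inverse function theorem cannot be applied to $\tilde\Psi\colon W_0\times(-\delta,T_0+\varepsilon)\to\tilde M$: since $\eta^0$ is geodesic, the tangent vector $(\eta^0_y,-\partial_s)$ always lies in the kernel of $d\tilde\Psi$ (the Jacobi field with initial data $J(0)=\eta^0_y$, $J'(0)=\nabla_{\eta^0}\eta^0=0$ is just the velocity $\gamma'$), so $d\tilde\Psi$ has rank at most $n-1$ everywhere. If instead you restrict the domain to $\Sigma\times I$, then the target $\tilde M$ has the wrong dimension for the IFT, and moreover nothing in the hypotheses prevents $(x_0,T_0)$ from being a \emph{focal} point of $\Sigma$ along $\gamma$ — indeed, under the extra assumption of weak inextendibility Proposition~\ref{propinfty} shows that $H_\eta$ blows up exactly there. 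So no IFT statement is available at the endpoint. Second, even granting a local-diffeomorphism property, your gluing along the single arc $\gamma([0,T_0])$ does not address \emph{global} injectivity: two generators through nearby points of $\Sigma$ might meet, or one might return to $\Sigma$, at parameter values bounded away from zero.

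The paper handles both points differently and entirely inside $L$. For the local-diffeomorphism step one writes $\Psi(x,s)=\Phi(x,h_x(s))$ and notes that $\Phi$ restricted to $\Sigma\times\mathbb{R}$ is immersive simply because each flow map $\Phi_t$ is a local diffeomorphism and $\Sigma$ is transverse to $\xi$; no endpoint analysis is needed. Injectivity is obtained by contradiction: if no shrinking of $\Sigma$ works, one gets sequences $(x_n,s_n)\neq(x_n^*,s_n^*)$ with the same image; the uniform bound $\alf^+_\xi\le (n-2)/c$ from $H_\xi\ge c>0$ on $\Sigma$ and Proposition~\ref{lemaNCCfinita} forces the $s$-parameters into a compact interval, and passing to the limit produces a \emph{periodic} null pregeodesic through $x_0$. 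Under hypothesis~(1) this contradicts causality; under hypothesis~(2) a periodic generator would have $\alf^+_\xi=\infty$, contradicting the same bound. Your instinct to split on the hypotheses is correct, but it enters at this final ``no periodic orbit'' stage rather than through an IFT at the endpoint.
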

\begin{proof}Pick
	any ``small'' codimension two (in $(M,g)$) spacelike submanifold $\Sigma\subset L$ containing $x_0$ and transverse to $\xi$. Then, for $\epsilon>0$ small enough,
	$$
	\mathcal{B}_\xi^\Sigma:=\{ (x,s)\in \mathcal{B}_\xi\cap (\Sigma\times\mathbb{R}):  -\varepsilon<s<\alf^+_\xi(x) \}
	$$
	is an open subset of $\Sigma\times\mathbb{R}$ (recall that we know from Lemma \ref{lemalsc} that  $\mathcal{B}_\xi$ is open in $L\times\mathbb{R}$).
	Denote by $\sigma:\mathcal{B}_\xi^\Sigma\rightarrow L$ the restriction\footnote{If there is no risk of confusion, we shall hereafter often drop explicit subscripts '$\eta,\xi$' from $\Psi,\Phi$ etc. in order to simplify the notation.} of $\Psi$ to $\mathcal{B}_\xi^\Sigma$. We first wish to show that $\sigma$ is a local diffeomorphism. Suppose by contradiction that $(x_1,s_1)\in\mathcal{B}_\xi^\Sigma$ is a singular point of $\sigma$. Since $\Psi(x,s)=\Phi(x,h_x(s))$ and $h'_x(s)>0$, then $(x_1,t_1)=(x_1,h_{x_1}(s_1))$ is a singular point of $\Phi$. So, in order to arrive at a contradiction, it suffices to show that $\Phi$ can not have singular points. Suppose that there is a nonzero $v\in T_{x_1}\Sigma$ and $a\in\mathbb{R}$ such that the derivative $\Phi_{*_{(x_1,t_1)}}(v+a\partial _t)=0$ (where $\partial_t$ denotes the standard lift to $\Sigma \times \mathbb{R}$ of the canonical vector field $d/dt$ on $\mathbb{R}$). But  
	\begin{equation}\begin{array}{rl}
			\Phi_{*_{(x_1,t_1)}}(v+a\partial_t) & =\left(\Phi_{t_1}\right)_{*_{x_1}}(v)+a\xi_{\Phi_{x_1}(t_1)} \\ & = \left(\Phi_{t_1}\right)_{*_{x_1}}(v)+a\left(\Phi_{t_1}\right)_{*_{x_1}}(\xi_{x_1})
			=\left(\Phi_{t_1}\right)_{*_{x_1}}(v+a\xi_{x_1}).
		\end{array}
	\end{equation}
	Since $\Phi_{t_1}$ is a local diffeomorphism and $v$ is transverse to $\xi_{x_1}$, we have $v=a=0$, an absurd. So, $\sigma:\mathcal{B}_\xi^\Sigma\rightarrow L$ is a local diffeomorphism.

	Finally, it suffices to prove the injectivity of $\sigma$ (for ``small enough'' $\Sigma$), since, in that case, the open subset $V=\sigma(\mathcal{B}_\xi^\Sigma)\subset L$ and the null section $\eta_{\sigma(x,t)}=\frac{\partial \Psi}{\partial t}(x,t)$ have the required properties. It is here that at least one of the stated assumptions will come into play. 
	
	Suppose by contradiction that $\sigma$ is not one-to-one, independently of how one shrinks $\Sigma$. Then there exist two sequences $\{x_n\}, \{x^*_n\}\in \Sigma$ approaching $x_0$ and $s_n\in(-\varepsilon,\alf_\xi^+(x_n))$, $s^*_n\in(-\varepsilon,\alf_\xi^+(x^*_n))$ such that $(x_n,s_n)\neq(x^*_n,s^*_n)$
	and $\Psi(x_n,s_n)=\Psi(x^*_n,s^*_n)$ for all $n\in \mathbb{N}$.

	If we consider $t_n:=h_{x_n}(s_n)$ and $t^*_n:=h_{x^*_n}(s^*_n)$, then 
	$\Phi_{t_n}(x_n)=\Phi_{t^*_n}(x^*_n)$. Since two integral curves of $\xi$ cannot cross, we conclude that $x_n$ and $x^\ast_n$ must be on the same integral curve, or equivalently 
	\begin{equation}\label{eqphi}
		x^*_n=\Phi_{t_n-t^*_n}(x_n).
	\end{equation}

	Now, assume the properties in hypothesis (1). 
	Observe that causality implies that $x_n\neq x^*_n$ for all $n$.
	By shrinking $\Sigma$ if necessary we can suppose that a number $c>0$ exists such that $c\leq H_\xi(x)$ for all $x\in \Sigma$. By applying Proposition \ref{lemaNCCfinita} we 
	have
	\begin{align*} 
		-\varepsilon< s_n< \alf^+_\xi(x_n)\leq \frac{n-2}{H_\xi(x_n)}\leq \frac{n-2}{c},\\
		- \varepsilon < s^*_n <\alf^+_\xi(x^*_n)\leq \frac{n-2}{H_\xi(x^*_n)}\leq \frac{n-2}{c}.
	\end{align*}
	Thus, up to passing to subsequences we can suppose that $\{s_n\}$ and $\{s^*_n\}$ converge to some $s_0,s^*_0\in\mathbb{R}$.  
	
	If we again let $t_n:=h_{x_n}(s_n)$ and $t^*_n:=h_{x^*_n}(s^*_n)$, then $t_n - t^*_n \rightarrow t_0$, and 
	taking limits in equation \eqref{eqphi} we get $x_0=\Phi_{t_0}(x_0)$. 
	Since $\sigma$ is a local diffeomorphism, it is injective in a neighbourhood of $(x_0,0)$. Therefore, $t_0\neq 0$.
	But in that case the integral curve through $x_0$ is periodic, contradicting the causality of $(M,g)$.
	
	Finally, if we assume hypothesis (2), just note that any periodic integral curve in $L$ through $x_n$ or $x^*_n$ would imply that $\alf^+_\xi(x_n)=\infty$ or $\alf^+_\xi(x^*_n)=\infty$ - contradicting the finiteness for these deduced in the previous case (this part does not use causality!). Now, the lower semicontinuity of $\alf^+_\xi$ also implies finiteness of $\alf^+_\xi(x_0)$, which in turn prevents the integral curve through $x_0$ of being periodic. Thus, a simple adaptation of the argument for the previous case goes through, completing the proof. 
\end{proof}

\begin{proposition}\label{propinfty}
	Let $(M,g)$ be a spacetime, and let $L$ be a null hypersurface therein of future finite type and with future prolonged null geodesic generators. Fix $x_0\in L$ and let $\eta\in\mathfrak{X}(L)$ be a future-directed geodesic null section 
	such that $L$ is weakly future inextendible along the null geodesic generator (integral curve of $\eta$) through $x_0$. If  $\alf^+_\eta(x_0)<\infty$ and $H_\eta(x_0)>0$, then  $$\lim_{s\rightarrow \alf^+_\eta(x_0)}H_\eta(\exp_{x_0}(s\eta_{x_0}))=\infty.$$
\end{proposition}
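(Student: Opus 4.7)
The plan is to exploit the Raychaudhuri equation for the geodesic null section $\eta$ along its integral curve $\alpha(s):=\exp_{x_0}(s\eta_{x_0})$ to produce a Riccati-type ODE for the reciprocal of the expansion, and then argue that the vanishing time of this reciprocal coincides with $\alf^+_\eta(x_0)$. The future finite type and future prolonged-generator hypotheses will be used to extend the coefficients of the ODE smoothly past $\alf^+_\eta(x_0)$, while the weak inextendibility assumption will rule out the ``late blow-up'' alternative.

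Setting $y(s):=H_\eta(\alpha(s))$ and using $\gs_\eta\equiv 0$, the Raychaudhuri equation \eqref{eqRay} yields $y'(s)={\rm Ric}(\eta,\eta)|_{\alpha(s)}+||A^*_\eta||^2_{\alpha(s)}$. Combined with NCC (inherited from the ambient setup of Theorem \ref{teoprin}) and \eqref{ineqh}, this gives $y'\geq y^2/(n-2)$, so $y$ is nondecreasing and in particular strictly positive on $[0,\alf^+_\eta(x_0))$, making $u:=1/y$ smooth there and a solution of
$$u'(s)=-{\rm Ric}(\eta,\eta)|_{\alpha(s)}\,u(s)^2-\Gamma^+_\eta(x_0,s),\qquad u(0)=1/H_\eta(x_0)>0.$$
Next I extend the coefficients past $\alf^+_\eta(x_0)$: future prolongability provides an extension $\tilde\alpha:[0,\tilde b)\to\tilde M$ of $\alpha$ in some isometric extension $\tilde M\supseteq M$ with $\tilde b>\alf^+_\eta(x_0)$, so ${\rm Ric}(\tilde\alpha',\tilde\alpha')$ is smooth on a neighborhood of $[0,\alf^+_\eta(x_0)]$; the future finite type hypothesis gives a smooth extension $\tilde\Gamma^+_\eta$ of $\Gamma^+_\eta$ past $\alf^+_\eta(x_0)$. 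Standard ODE theory then produces a unique maximal smooth solution $\tilde u$ on some $[0,s_{\max})$, with $\tilde u\equiv u$ on $[0,\alf^+_\eta(x_0))$ by uniqueness. Since $\tilde u'\leq -1/(n-2)$ whenever $\tilde u\geq 0$ (by NCC and $\tilde\Gamma^+_\eta\geq 1/(n-2)$), $\tilde u$ is strictly decreasing and reaches $0$ at a unique finite time $s_*\leq (n-2)/H_\eta(x_0)$.

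It remains to prove $s_*=\alf^+_\eta(x_0)$. If $s_*<\alf^+_\eta(x_0)$, then $u(s_*)=0$ forces $y(s_*)=+\infty$, contradicting the smoothness of $H_\eta$ at the interior point $\alpha(s_*)\in L$. If instead $s_*>\alf^+_\eta(x_0)$, then $u$ stays bounded away from $0$ on $[0,\alf^+_\eta(x_0)]$, so $y$ and $||A^*_\eta||^2=\Gamma^+_\eta\,y^2$ remain bounded along $\alpha$. The smoothly extended shape-operator data together with future prolongability should then permit the flow of $\eta$ from a small transverse spacelike codimension-two submanifold $\Sigma\ni x_0$ to be continued past $\alf^+_\eta(x_0)$ as a null codimension-one immersion whose image contains a neighborhood of $\alpha$ in $L$ and the endpoint $p:=\tilde\alpha(\alf^+_\eta(x_0))$, weakly extending $L$ along $\alpha$ and contradicting the hypothesis. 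Hence $s_*=\alf^+_\eta(x_0)$ and $y(s)=1/\tilde u(s)\to+\infty$ as $s\to\alf^+_\eta(x_0)$. The main obstacle will be this last ``no-focusing'' step: using the smoothness of $\tilde\Gamma^+_\eta$ across $\alf^+_\eta(x_0)$ and bounds on the Jacobi fields along $\alpha$ to produce a genuine null-hypersurface extension past $p$, while handling the possibility that $p$ lives in a proper isometric extension rather than in the original $M$.
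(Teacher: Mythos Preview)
Your Riccati set-up is sound, but the step you flag as ``the main obstacle'' is not a loose end: it is the entire content of the proposition, and your argument does not supply it. The paper proceeds in the direct (rather than contrapositive) direction and thereby avoids having to \emph{construct} a weak extension. Since the generators are future prolonged, $\widetilde\Psi(x,s)=\exp(s\eta_x)$ is defined on $\Sigma\times(-\delta,\alf^+_\eta(x_0)+\delta)$ for a small transverse spacelike $\Sigma\ni x_0$; weak future inextendibility along $\alpha$ then forces $(x_0,\alf^+_\eta(x_0))$ to be a \emph{singular} point of $\widetilde\Psi$, which produces a Jacobi field $J$ along $\alpha$ with $J(0)$ spacelike and $J(\alf^+_\eta(x_0))$ proportional to $\alpha'$. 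Writing $f(s)=g(J(s),J(s))$ one has $f(0)>0$, $f(\alf^+_\eta(x_0))=0$, and $f'=-2B_\eta(J,J)=-2fH_\eta A$ with $A:=B_\eta(J,J)/(fH_\eta)$ satisfying $A^2\leq\Gamma^+_\eta$. The finite-type hypothesis bounds $|A|$ near the endpoint, so $f(s)=f(s_0)\exp\bigl(-2\int_{s_0}^{s} H_\eta A\bigr)\to 0$ forces $\int H_\eta=\infty$ on a finite interval; since $H_\eta\circ\alpha$ is nondecreasing (NCC), $H_\eta\to\infty$.

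Completing your route would require precisely the contrapositive of this Jacobi computation (bounded $\|A^*_\eta\|$ on $[0,\alf^+_\eta(x_0))$ gives $f\geq f(0)\exp\bigl(-2\int\|A^*_\eta\|\bigr)>0$ for every transverse Jacobi field, hence $\widetilde\Psi$ is an immersion at the endpoint), followed by a gluing of the resulting local strip with all of $L$ so as to meet the requirement $L\subset\phi(\Sigma)$ in Definition~\ref{def.1.2}. Two smaller points: your inequality $\tilde u'\leq -\tfrac{1}{n-2}$ is unjustified beyond $\alf^+_\eta(x_0)$, since neither NCC in the isometric extension $\tilde M$ nor the bound $\tilde\Gamma^+_\eta\geq\tfrac{1}{n-2}$ for the abstract smooth extension is assumed---fortunately you do not actually need a zero $s_*$ of $\tilde u$, only that $\lim_{s\uparrow\alf^+_\eta(x_0)}u(s)=0$; and the difficulty you raise about $p\in\tilde M\setminus M$ is genuine---the paper sidesteps it by declaring $\tilde M=M$ without loss of generality.
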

\begin{proof} 
	Let $\Sigma\subset L$ be a codimension one (in $L$) spacelike submanifold  with $x_0\in \Sigma$.
	Since $\alf^+_{\eta}(x_0)<\infty$ and $L$ has future prolonged generators, there is $\delta>0$ such that
	the map
	$\widetilde{\Psi}:\Sigma\times (-\delta,\alf^+_\eta(x_0)+\delta)\rightarrow M$ given by
	$\widetilde{\Psi}(x,s):=\exp(s\eta_x)$ is well-defined (for simplicity and without loss of generality, we suppose that $M=\tilde{M}$ in definition \ref{def.1.1}). Moreover, since $L$ is weakly future inextendible along    $\alpha(s):=\widetilde{\Psi}(x_0,s)$,  the pair $(x_0,\alf^+_\eta(x_0))$ is a singular point of $\widetilde{\Psi}$. Therefore, there are curves $x:(-\varepsilon,\varepsilon)\rightarrow \Sigma$ and $t:(-\varepsilon,\varepsilon)\rightarrow (-\delta,\alf^+_\eta(x_0)+\delta)$ such that $x(0) = x_0$, $t(0)=\alf^+_\eta(x_0)$  and
	\begin{align*}
		\widetilde{\Psi}_{*_{(x_0,\alf^+_\eta(x_0))}}(x'(0)+t'(0)\partial_s|_{\alf^+_\eta(x_0) })=0.
	\end{align*}
	In other words, 
	\begin{align*}
		\widetilde{\Psi}_{*_{(x_0,\alf^+_\eta(x_0))}}(x'(0))=\lambda \alpha'(\alf^+_\eta(x_0))\quad\hbox{for some constant $\lambda$.}
	\end{align*}
	Consider the geodesic variation $X:(-\varepsilon,\varepsilon)\times [0,\alf^+_\eta(x_0)]\rightarrow M$, for $\varepsilon>0$ small enough, given by $X(t,s)=\widetilde{\Psi}(x(t),s)=\exp_{x(t)}(s\eta_{x(t)})$. Consider also the Jacobi vector field $J(s):=X_t(0,s)$. Then,
	\begin{align*}
		J(\alf^+_\eta(x_0))=\lambda \alpha'(\alf^+_\eta(x_0)),\qquad J(0)=x'(0),\qquad J'(s)=\nabla_{J(s)}\eta.
	\end{align*}
	Therefore, if we define $f(s):=g(J(s),J(s))$ then 
	$$	f'(s)=2g(\nabla_{J(s)}\eta,J(s))=-2 B_{\eta}(J(s),J(s)),\;\quad f(0)>0,\;\quad f(\alf^+_\eta(x_0))=0.
	$$
	Observe also that there is $\epsilon>0$ such that $f(s)>0$ for all $s\in [\alf^+_\eta(x_0)-\epsilon,\alf^+_\eta(x_0))$. Indeed, take $\epsilon>0$ such that $\alpha([\alf^+_\eta(x_0)-\epsilon,\alf^+_\eta(x_0)])$ is inside a normal neighbourhood of $\alpha(\alf^+_\eta(x_0))$.
	If $f(s_0)=0$ for some $s_0\in[\alf^+_\eta(x_0)-\epsilon,\alf^+_\eta(x_0))$, then $J(s_0)$ would be proportional to $\alpha'(s_0)$. Since $J(\alf^+_\eta(x_0))$ is also proportional to $\alpha'(\alf^+_\eta(x_0))$ and $\alpha(s_0)$ and $\alpha(\alf^+_\eta(x_0))$ are not conjugate points along $\alpha$, then $J(s)$ would be proportional to $\alpha'(s)$ for all $s\in[0,\alf^+_\eta(x_0)]$, contradicting $f(0)>0$.
	
	Since $M$ satisfies the null convergence condition, from inequality  \eqref{ineqRay} we have that $H_\eta(\alpha(s))$ is nondecreasing and thus $H_\eta(\alpha(s))>0$ for all $s\in [0,\alf^+_\eta(x_0))$ because $H_\eta(x_0)>0$. So, we can define 
	\begin{align*}
		A(s):=
		\frac{B_\eta(J(s),J(s))}{g(J(s),J(s))H_\eta(\alpha(s))}\qquad\hbox{for all $s\in (\alf^+_\eta(x_0)-\epsilon,\alf^+_\eta(x_0))$.}
	\end{align*}
	By solving the differential equation $f'(s)=-2f(s)H_\eta(\alpha(s)) A(s)$, 
	we get 
	$$
	f(s)=f(s_0)e^{-2\int_{s_0}^s H_{\eta}(\alpha(u))A(u)du},
	$$
	where $s_0=\alf^+_\eta(x_0)-\epsilon$.
	Since  $A(s)^2\leq \Gamma_\eta^+(x,s)$ and $L$ is of finite type, $|A(s)|$ is necessarily bounded.  
	Taking into account this, that $H_\eta(\alpha(s))$ is nondecreasing and that $f(\alf^+_\eta(x_0))=0$,  we conclude
	$$
	\lim_{s\rightarrow \alf^+_\eta(x_0)}H_{\eta}(\alpha(s))=\infty.
	$$
\end{proof}

\medskip

\begin{proof} {\em (of Theorem \ref{teoprin}).} 
	Let $\zeta\in\mathfrak{X}(M)$ be a future-directed timelike vector field on $M$ and consider the associated future-directed null section $\xi\in \mathfrak{X}(L)$ determined by the condition $g(\zeta,\xi)\equiv -1$. According to Proposition \ref{lemaed} \eqref{p1prop}, it suffices to show that $\Lambda^+_\xi:L\rightarrow \mathbb{R}$ is differentiable. Henceforth, fix $x_0 \in L$.

	By Proposition  \ref{propbanda1} (2), there is an open subset $x_0\in V\subset L$ and a geodesic null section $\eta\in\mathfrak{X}(V)$ with $\eta_{x_0}=\xi_{x_0}$ and such that $\exp_x(s\eta_x)\in V$ for all $x\in V$ and all $0\leq s<\alf^+_\eta(x)$.
	Let $y:\mathcal{B}_\xi^V\rightarrow \mathbb{R}$ be given by $y_x(s):=H_\eta(\exp_x(s\eta_x))$. The Raychaudhuri equation \eqref{eqRay} is written as 
	$$
	R_x(s)=y'_x(s)-y_x(s)^2\Gamma_\eta(x,s),
	$$
	where 
	\begin{align*}
		R_x(s)&:={\rm Ric}\left (  \frac{d}{ds}\exp_x(s\eta_x)   , \frac{d}{ds}\exp_x(s\eta_x)  \right).
	\end{align*}
	Since $L$ has future prolonged null generators, $R_x$ can be extended to a differentiable function on some open subset of $V\times\mathbb{R}$ containing the closure of  $\mathcal{B}_\xi^V$. Since $L$ is of finite type, the same happens for the function $\Gamma_\eta$. 
	Moreover, Proposition \ref{propinfty} ensures that  $\lim_{s\rightarrow \alf^+_\eta(x)}y_x(s)=\infty$, and consequently, $z_x(s):=\frac{1}{y_x(s)}$ satisfies 
	$$
	z'_x(s)+R_x(s)z_x(s)^2=-\Gamma_\eta(x,s),\qquad \lim_{s\rightarrow \alf^+_\eta(x)}z_x(s)=0.
	$$
	If we take $z:\mathcal{O}\rightarrow\mathbb{R}$, where 
	$\mathcal{O}\subset V\times \mathbb{R}$, the maximal solution to the initial value problem
	$$
	\begin{cases}
		z_x'(s)+R_x(s)z_x(s)=-\Gamma_\eta(x,s) \\
		z_x(0)=\frac{1}{H_\eta(x)}
	\end{cases}
	$$
	for $x\in V$, then $\mathcal{B}_\xi^V\subset \mathcal{O}$ and $(\alf_\eta^+(x),x)\in\mathcal{O}-\mathcal{B}_\xi^V$ for all $x\in V$.

	Therefore, $z_x(s)$ can be extended beyond $\alf^+_\eta(x)$ and $z_x(\alf^+_\eta(x))=0$. Moreover\footnote{Here we implicitly use smooth dependence on the parameters. This follows from standard arguments used, for instance, in the proof of \cite[Thm. 9.48. p. 237]{Lee}.}, 
	$z'_{x_0}(\alf^+_\eta(x_0))=-\Gamma_\eta(x_0,\alf^+_\eta(x_0))\neq 0$. Then, by the Implicit Function Theorem, $\alf^+_\eta$ is smooth at a neighbourhood of $x_0$, and by Lemma \ref{cambiofunciones}, the same holds for $\alf^+_\xi$. 
\end{proof}

We have the following as an immediate corollary of Theorem \ref{teoprin}.

\begin{corollary}  Let $(M,g)$ be a spacetime which obeys the null convergence condition and let $L\subset M$ be a null hypersurface such that:
	\begin{enumerate}
		\item $L$ is future  weakly inextendible and it has future  prolonged null generators.
		\item $L$ is totally umbilic with positive null mean curvature (with respect to a future-directed null section).
	\end{enumerate}
	Then, there exists a  null section with nonzero constant surface gravity.
\end{corollary}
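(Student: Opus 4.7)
The plan is to apply Theorem \ref{teoprin} directly: the null convergence condition, future weak inextendibility, and future prolongedness of the null generators are already part of the hypotheses of the corollary, so the only thing left to verify is that $L$ is of future finite type in the sense of Definition \ref{def.1.3}.

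First, I would fix a future-directed null section $\xi\in\mathfrak{X}(L)$ for which $H_\xi>0$ holds everywhere on $L$ (as guaranteed by hypothesis (2)). The first bullet of Definition \ref{def.1.3} is then automatic, since $H_\xi(\Psi_\xi(x,s))>0$ holds for every admissible $s\in(\alf^-_\xi(x),\alf^+_\xi(x))$. Next, I would exploit the totally umbilic condition $B_\xi=\tfrac{H_\xi}{n-2}\,g$, which via the identification $B_\xi(X,Y)=\sigma(A^*_\xi(X),Y)$ translates into $A^*_\xi=\tfrac{H_\xi}{n-2}\,\mathrm{Id}_{\mathcal{S}}$ on any chosen screen distribution. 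A direct computation then yields
\[
\|A^*_\xi\|^2 \;=\; (n-2)\left(\frac{H_\xi}{n-2}\right)^2 \;=\; \frac{H_\xi^2}{n-2},
\]
so $\Gamma^+_\xi\equiv\tfrac{1}{n-2}$ on its entire domain $\mathcal{B}_\xi$. Consequently $\Gamma^+_\xi$ extends as the constant function $\tfrac{1}{n-2}$ smoothly to all of $L\times\mathbb{R}$, which is an open set containing $(x,\alf^+_\xi(x))$ for every $x\in L$ with $\alf^+_\xi(x)<\infty$. This verifies the second bullet, so $L$ is of future finite type and Theorem \ref{teoprin} delivers the desired null section with nonzero constant surface gravity.

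I do not expect any genuine obstacle here: the whole argument is a repackaging of the observation already recorded just below Definition \ref{def.1.3}, namely that any totally umbilic null hypersurface with everywhere-negative expansion is automatically of finite type. The role of the corollary is simply to highlight a clean, easily-verifiable geometric criterion under which the main theorem applies without having to inspect the shear and mean curvature along each generator separately.
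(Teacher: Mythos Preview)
Your proposal is correct and matches the paper's approach exactly: the paper states this as an ``immediate corollary'' of Theorem \ref{teoprin}, relying on the observation recorded right after Definition \ref{def.1.3} that a totally umbilic null hypersurface with everywhere-negative expansion has $\Gamma^+_\xi\equiv\tfrac{1}{n-2}$ and is therefore of future finite type. You have simply spelled out that observation in full, which is precisely what is needed.
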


By combining (the proof of) Theorem \ref{teoprin}
with Proposition \ref{lemaed} \eqref{pgprop}, we also deduce the following result.

\begin{corollary}  Let $(M,g)$ be a spacetime which obeys the null convergence condition and $L\subset M$ a  null hypersurface such that:
	\begin{enumerate}
		\item $L$ is future and past weakly inextendible and it has future and past prolonged null generators.
		\item $L$ is of future and past finite type. 
	\end{enumerate}
	Then, there exists a geodesic null section.
\end{corollary}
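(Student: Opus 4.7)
The strategy is to run the argument used in the proof of Theorem \ref{teoprin} twice---once forward and once backward in time---so as to obtain smoothness of both affine length functions, and then invoke Proposition \ref{lemaed}\eqref{pgprop} directly.

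First, I would fix a future-directed timelike vector field $\zeta\in \mathfrak{X}(M)$ and let $\xi\in\mathfrak{X}(L)$ be the associated future-directed null section determined by $g(\zeta,\xi)\equiv -1$. The future-side hypotheses of the corollary---future weak inextendibility, future prolonged generators, and future finite type---are precisely those needed in the proof of Theorem \ref{teoprin} to establish that $\alf^+_\xi:L\to\mathbb{R}$ is (finite and) smooth at every point of $L$. I would simply quote that conclusion, without rerunning the Raychaudhuri / Implicit Function Theorem argument.

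Second, since the null convergence condition is time-symmetric and each of the conditions in Definitions \ref{def.1.1}, \ref{def.1.2}, \ref{def.1.3} admits an obvious past analogue (which is in fact the past-side hypothesis of the corollary), I would apply the proof of Theorem \ref{teoprin} to the spacetime with reversed time orientation and to the past-directed null section $-\xi$ (which is future-directed in the reversed spacetime). Using the relation $\alf^-_\xi(x)=-\alf^+_{-\xi}(x)$ recorded in Section \ref{sec2}, this yields smoothness and finiteness of $\alf^-_\xi:L\to\mathbb{R}$ at every point of $L$.

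Finally, with $\alf^+_\xi$ and $\alf^-_\xi$ now both smooth real-valued functions on $L$, Proposition \ref{lemaed}\eqref{pgprop} applies verbatim: the section $\eta=(\alf^+_\xi-\alf^-_\xi)\cdot \xi$ is geodesic. Note that the rescaling factor $\alf^+_\xi-\alf^-_\xi$ is everywhere strictly positive, since $\alf^+_\xi>0$ and $\alf^-_\xi<0$ pointwise, so $\eta$ is genuinely a null section (nowhere zero). I do not anticipate a substantive obstacle: all the hard analytic work has already been done in Theorem \ref{teoprin} and Proposition \ref{lemaed}, and the only point deserving a line of care is checking that time reversal really does convert the past-side hypotheses of the corollary into the future-side hypotheses of Theorem \ref{teoprin}, which is immediate from the definitions.
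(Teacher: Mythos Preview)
Your proposal is correct and matches the paper's approach exactly: the paper states that the corollary follows ``by combining (the proof of) Theorem \ref{teoprin} with Proposition \ref{lemaed}\eqref{pgprop},'' which is precisely your two-step argument of running Theorem \ref{teoprin} forward and backward to obtain smoothness of $\alf^\pm_\xi$ and then applying Proposition \ref{lemaed}\eqref{pgprop}. Your extra remark that $\alf^+_\xi-\alf^-_\xi>0$ ensures $\eta$ is a genuine null section is a helpful clarification the paper leaves implicit.
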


Observe that the second condition in the above corollary implies 
that along each null generator there are points with positive and negative null mean curvature.

\section*{Acknowledgements}
The authors are partially supported by the project PID2020-118452GBI00. The second author is also partially supported by the IMAG-María de Maeztu grant CEX2020-001105-M (funded by MCIN/AEI/10.13039/50110001103).

\end{document}